\newcommand{\norm}[2]{\left \lVert#2\right \rVert_{#1}}
\newcommand{\eps}{\varepsilon}
\newcommand{\set}[1]{\left\{#1\right\}}
\newcommand{\Prb}[2]{\mathrm{Pr}_{#1}\left[#2\right]}
\newcommand{\Exp}[2]{\mathrm{E}_{#1}\left[#2\right]}
\newcommand{\Rbb}{\mathbb{R}}
\newcommand{\Mc}{\mathcal{M}}
\newcommand{\Bc}{\mathcal{B}}
\newcommand{\Tc}{\mathcal{T}}
\DeclareMathOperator{\supp}{supp}
\DeclareMathOperator{\argmin}{argmin}
\newtheorem{theorem}{Theorem}
\newtheorem{lemma}{Lemma}
\newtheorem{definition}{Definition}
\newtheorem{corollary}{Corollary}
\begin{document}
    \title{On Model-Based RIP-$1$ Matrices}
    \author{Piotr Indyk\thanks{MIT CSAIL, \texttt{indyk@mit.edu}} \and Ilya Razenshteyn\thanks{MIT CSAIL, \texttt{ilyaraz@mit.edu}}}
    \date{\vspace{-5ex}}
    \maketitle
    \begin{abstract}
        The Restricted Isometry Property (RIP) is a fundamental property of a matrix enabling sparse recovery~\cite{CRT06}. Informally, an $m \times n$ matrix satisfies RIP of order $k$ in the $\ell_p$ norm if $\|Ax\|_p \approx \|x\|_p$ for any vector $x$ that is $k$-sparse, i.e., that has at most $k$ non-zeros.  The minimal number of rows $m$ necessary for the property to hold has been extensively investigated, and tight bounds are known.
       Motivated by signal processing models,  a recent work of Baraniuk et al~\cite{BCDH10} has generalized this notion to the case where the support of $x$ must belong to a given {\em model}, i.e., a given family of supports.   This more general notion is much less understood, especially for norms other than $\ell_2$. 
       
       In this paper we present tight bounds for the model-based RIP property in the $\ell_1$ norm. Our bounds hold for the two most frequently investigated models: tree-sparsity and block-sparsity. We also show implications of our results to sparse recovery problems.
        \end{abstract}
    \section{Introduction}
    \label{intro}

In recent years, a new ``linear'' approach for obtaining a succinct
approximate representation of $n$-dimensional vectors (or signals) has
been discovered.  For any signal $x$, the representation is equal to
$Ax$, where $A$ is an $m \times n$ matrix, or possibly a random
variable chosen from some distribution over such matrices.  The vector
$Ax$ is often referred to as the {\em measurement vector} or {\em
  linear sketch} of $x$.  Although $m$ is typically much smaller than
$n$, the sketch $Ax$ often contains plenty of useful information about
the signal $x$.

A particularly useful and well-studied problem is that of {\em stable
  sparse recovery}. We say that a vector $x'$ is $k$-sparse if it has at most $k$ non-zero
coordinates.  The sparse recovery problem is typically defined as follows: for
some norm parameters $p$ and $q$ and an approximation factor $C>0$,
given $Ax$, recover an ``approximation'' vector $x^*$ such that
\begin{equation}
\label{e:lplq}
\norm{p}{x-x^*} \le C \min_{k\text{-sparse } x'}  \norm{q}{x-x'}
\end{equation}
(this inequality is often referred to as {\em $\ell_p/\ell_q$
  guarantee}).  
   Sparse recovery has a tremendous number of applications
in areas such as compressive sensing of
signals~\cite{CRT06,D06}, genetic
data acquisition and analysis and data stream
algorithms~\cite{M05,GI10}.
 
It is known~\cite{CRT06}
that there exist matrices $A$ and associated recovery algorithms that
produce approximations $x^*$ satisfying Equation~\eqref{e:lplq} with
$p=q=1$\footnote{In fact, one can prove a somewhat stronger guarantee, referred to as the $\ell_2/\ell_1$ guarantee.  }
 constant approximation factor $C$, and sketch length
\begin{equation}
\label{e:m}
m=O(k \log (n/k))
\end{equation}
This result was  proved by showing that there exist matrices $A$ with $m=O(k \log (n/k))$ rows that satisfy the {\em Restricted Isometry Property (RIP)}. Formally,  we say that $A$ is \emph{a $(k, \eps)$-RIP-$p$ matrix},
        if for every $x \in \Rbb^n$
        with at most $k$ non-zero coordinates we have
        $$
            (1 - \eps) \|x\|_p \leq \|Ax\|_p \leq (1 + \eps) \|x\|_p.
        $$
The proof of~\cite{CRT06} proceeds by showing that (i) there exist matrices with $m=O(k \log (n/k))$ rows that satisfy $(k, \eps)$-RIP-$2$ for some constant $\eps>0$ and (ii) for such matrices there exist a polynomial time recovery algorithm that given $Ax$ produces $x^*$ satisfying Equation~\ref{e:lplq}. Similar results were obtained for RIP-$1$ matrices~\cite{BGIKS08}. The latter matrices are closely connected to hashing-based streaming algorithms for heavy-hitter problems, see~\cite{GI10} for an overview.

It is known that the bound on the number of measurements in~Equation~\eqref{e:m} is
asymptotically optimal for some constant $C$ and $p=q=1$,
see~\cite{DIPW10} and~\cite{FPRU10} (building on~\cite{D06,GG84,G84,K77}).  
The necessity of the ``extra'' logarithmic factor multiplying $k$ is
quite unfortunate: the sketch length determines the ``compression
rate'', and for large $n$ any logarithmic factor can worsen that rate
tenfold.
Fortunately, a more careful modeling offers a way to overcome the aforementioned limitation. In particular, after decades of research in signal modeling, signal processing researchers know that not all supports (i.e., sets of non-zero coordinates) are equally common . For example, if a signal is a function of time, large coefficients of the signal tend to occur consecutively. This phenomenon can be exploited by searching for the best $k$-sparse approximation $x^*$ whose support belongs to a given ``model" family of supports $\Mc_k$ (i.e., $x^*$ is  $\Mc_k$-{\em sparse}). Formally, we seek $x^*$ such that
 \begin{equation}
 \label{e:model}
        \|x - x^*\|_p \leq C \cdot \min_{\begin{smallmatrix}\supp(x') \subseteq T \\ T \in \Mc_k\end{smallmatrix}} \|x - x'\|_q
 \end{equation}
 
    for some family $\Mc_k$ of $k$-subsets of $[n]$.
    Clearly, the original $k$-sparse recovery problem corresponds to the case, when $\Mc_k$ is a family of all $k$-subsets
    of $[n]$.

A prototypical example of a sparsity model is {\em block sparsity}~\cite{EM09}. Here the signal is divided into blocks of size $b$, and the non-zero coefficients belong to at most $k/b$ blocks. This model is particularly useful for bursty time signals, where the ``activity'' occurs during a limited time period, and is therefore contained in a few blocks.
Another example is {\em tree sparsity}~\cite{RCB01} which models the structure of wavelet coefficients. Here the non-zero coefficients form a rooted subtree in a full binary tree defined over the coordinates.\footnote{See Section~\ref{s:def} for formal definitions of the two models.}. For many such scenarios the size of the family $\Mc_k$ is much smaller than ${n \choose k}$, which in principle makes it possible to recover an approximation from fewer measurements. 

An elegant and very general model-based sparse recovery scheme was recently provided in a seminal work of Baraniuk et al~\cite{BCDH10}. The scheme has the property that, for any ``computationally tractable'' family of supports of ``small" size, it guarantees a near-optimal sketch length $m=O(k)$, i.e., without any logarithmic factors. This is achieved by showing the existence of matrices $A$ satisfying the {\em model-based} variant of RIP. Formally, we  say that $A$ satisfies
        $\eps$-$\Mc_k$-RIP-$p$  if
    \begin{equation}
    \label{e:modelrip}
           (1 - \eps)\|x\|_p \leq \|Ax\|_p \leq (1 + \eps)\|x\|_p
      \end{equation}
        for any $\Mc_k$-sparse  vector $x \in \Rbb^n$.

In~\cite{BCDH10} it was shown that there exist matrices with $m=O(k)$ rows that satisfy $\eps$-$\Mc_k$-RIP-$2$ as long as (i) either $\Mc_k$ is the block-sparse model and $b=\Omega(\log n)$ or (ii) $\Mc_k$ is the tree-sparse model. This property can be then used to give an efficient algorithm that, given $Ax$, finds $x^*$ satisfying a variant of the guarantee of Equation~\ref{e:model}.
 However,  the guarantees offered in~\cite{BCDH10}, when phrased in the $\ell_1/\ell_1$ framework, results in a super-constant approximation factor $C=\Theta(\sqrt{\log n})$~\cite{IP11}.  The question of whether this bound can be improved has attracted considerable attention in signal processing and streaming communities. 
In particular, one of the problems\footnote{See Question 15: Sparse Recovery for Tree Models. The question was posed by the first author.} listed in the Bertinoro workshop open problem list~\cite{B11} asks whether there exist matrices $A$ with $m=O(k)$ rows  that provide the $\ell_1/\ell_1$ guarantee for the tree-sparse model with some constant approximation factor $C$.

\paragraph{Our results} In this paper we make a substantial progress on this question. In particular:

\begin{enumerate}
\item For both block-sparse and tree-sparse models, we show that there exist $m\times n$ matrices $A$ that provide the $\ell_1/\ell_1$ guarantee for some constant approximation factor $C$, such that the number of measurements improves over the bound of Equation~\ref{e:m} for a wide range of parameters $k$ and $b$. In particular we show that for the block-sparse model we can achieve $m=O(k \cdot (1 + \log_b (n / k)))$. 
This improves over the $O(k \log(n/k))$ bound of Equation~\ref{e:m}  for any $b = \omega(1)$.
In particular, if $b=n^{\Omega(1)}$, we obtain $m=O(k)$. 
For the tree-sparse model we achieve $m=O(k \log(n/k)/\log \log (n/k))$ as long as $k=\omega(\log n)$. This also improves over the $O(k \log(n/k))$ bound of Equation~\ref{e:m}.

 We note, however,  that our results are not accompanied by efficient recovery algorithms. Instead, we show the existence of model-based RIP-1 matrices with the given number of rows. This implies that $Ax$ contains enough information to recover the desired approximation $x^*$ (see Appendix~\ref{s:imp} for more details). 
\item We complement the aforementioned results by showing that the measurement bounds achievable for a matrix satisfying block-sparse or tree-sparse RIP-1 property cannot be improved (i.e., our upper bounds are tight). This provides strong evidence that the number of measurements required for sparse recovery itself cannot be $O(k)$. 
\end{enumerate}

Our results show a significant difference between the model-based RIP-$1$ and RIP-$2$ matrices. For the $\ell_2$ norm, the original paper~\cite{BCDH10} shows that the number of measurements  is fully determined by the cardinality of the model. Specifically,  their proof proceeds by applying the union bound over all elements
of $\Mc_k$ on top of the Johnson--Lindenstrauss-type concentration inequality.
This leads to a measurement bound of $m = O(k + \log |\Mc_k|)$, which is $O(k)$ for the tree-sparse or block-sparse models.
In contrast, in case of the $\ell_1$ norm we obtain an upper bound with a very different dependence on
$|\Mc_k|$: namely, if $|\Mc_k| \geq n / k$, then we are able to achieve
$$
    m = O\left(\frac{k \cdot \log \frac{n \log(n / k)}{\log |\Mc_k|}}{\log \frac{k \log (n / k)}{\log |\Mc_k|}}\right).
$$
Moreover, our lower bounds show that this bound is tight for the block- and the tree-sparse cases.

\paragraph{Our techniques} Our lower bounds are obtained by relating RIP-1 matrices to novel combinatorial/geometric structures we call  {\em generalized expanders}. Specifically, it is known~\cite{BGIKS08} that any {\em binary} 0-1 matrix $A$ that satisfies $(k, \eps)$-RIP-$1$ is an adjacency matrix of an unbalanced $(k, \eps)$-expander (see Section~\ref{s:def} for the formal definition). The notion of a generalized expander can be viewed as extending the notion of expansion to matrices that are not binary. Formally, we define it as follows.

    \begin{definition}[Generalized expander]
        \label{def_gen_expander}
        Let $A$ be an $m \times n$ real matrix. We say that $A$ is \emph{a generalized $(k, \eps)$-expander}, if 
        all $A$'s columns have $\ell_1$-norm at most $1 + \eps$, and for
        every $S \subseteq [n]$ with $|S| \leq k$ we have
        $$
            \sum_{i \in [m]} \max_{j \in S} |a_{ij}| \geq |S| \cdot (1 - \eps).
        $$
    \end{definition}

Observe that the notion coincides with the standard notion of expansion for binary 0-1 matrices (after a proper scaling).

In this paper we show that any (not necessarily binary) RIP-1 matrix is also a generalized expander. We then use this fact to show that any RIP-1 matrix can be sparsified by replacing most of its entries by 0. This in turn lets us use counting arguments to lower bound  the number of rows of such matrix. 

Our upper bounds are obtained by constructing low-degree expander-like graphs. However, we only require that the expansion holds for the sets from the given model $\Mc_k$. This allows us to reduce the number of the right nodes of the graph, which corresponds to reducing the number of rows in its adjacency matrix.

    \section{Definitions}
    \label{s:def}

    In this section we provide the definitions we will use throughout the text.

    \begin{definition}[Expander]
        \label{def_expander}
        Let $G = (U, V, E)$ with $|U| = n$, $|V| = m$, $E \subseteq U \times V$ be a bipartite graph
        such that all vertices from $U$ have the same degree $d$. Then we say that $G$ is \emph{a $(k, \eps)$-expander},
        if for every $S \subseteq U$ with $|S| \leq k$ we have
        $$
            |\set{v \in V \mid \exists u \in S \; (u, v) \in E}| \geq (1 - \eps) d |S|.
        $$
    \end{definition}

    \begin{definition}[Model]
        Let us call any non-empty subset
        $$
            \Mc_k \subseteq \Sigma_k = \set{A \subseteq [n] \mid |A| = k}
        $$
        \emph{a model}.
    \end{definition}

    In particular, $\Sigma_k$ is a model as well.

    \begin{definition}[Block-sparse model]
        Suppose that $b, k \in [n]$. Moreover, $b$ divides both $k$ and $n$.
        Let us partition our universe $[n]$ into $n / b$ disjoint blocks $B_1$, $B_2$, \ldots, $B_{n/b}$ of size $b$.
        We consider the following \emph{block-sparse model}: $\Bc_{k, b}$ consists of all unions of $k / b$ blocks.
    \end{definition}

    \begin{definition}[Tree-sparse model]
        Suppose that $k \in [n]$ and $n = 2^{h+1} - 1$, where $h$ is a non-negative integer.
        Let us identify the elements of $[n]$ with the vertices of a full
        binary tree of depth $h$.
        Then, \emph{tree-sparse model} $\Tc_k$ consists of all subtrees of size $k$ that contain the root of the full binary tree.
    \end{definition}

    \begin{definition}[Model-sparse vector/set]
        Let $\Mc_k \subseteq \Sigma_k$ be any model.
        We say that a set $S \subseteq [n]$ is \emph{$\Mc_k$-sparse}, if $S$ lies within a set from $\Mc_k$.
        Moreover, let us call a vector $x \in \Rbb^n$ \emph{$\Mc_k$-sparse}, if its support is a $\Mc_k$-sparse set.
    \end{definition}

    It is straightforward to generalize the notions of RIP-$p$ matrix, expanders and generalized expanders to the case of $\Mc_k$-sparse vectors
    and sets.
    Let us call the corresponding objects \emph{$\eps$-$\Mc_k$-RIP-$p$ matrix}, \emph{$\eps$-$\Mc_k$-expander} and
    \emph{generalized $\eps$-$\Mc_k$-expander}, respectively.
    Clearly, the initial definitions correspond to the case of $\Sigma_k$-sparse vectors and sets.

    Our two main objects of interest are $\Bc_{k, b}$- and $\Tc_k$-RIP-$1$ matrices.

    \section{Sparsification of RIP-$1$ matrices}

    In this section we show that any $n \times m$ matrix, which is $(k, \eps)$-RIP-$1$, can be sparsified after removing
    $(1 - \Omega(1)) n$ columns (Theorem~\ref{sparsification}). Then we state an obvious generalization of this fact
    (Theorem~\ref{model_sparsification}), which will be useful for proving lower bounds on the number of rows for $\Bc_{k, b}$- and $\Tc_k$-RIP-$1$ matrices.

    \begin{theorem}
        \label{sparsification}
        Let $A$ be any $m \times n$ matrix, which is $(k, \eps)$-RIP-$1$. Then there exists an $m \times \Omega(n)$ matrix
        $B$ which is $(k, O(\eps))$-RIP-$1$, has at most $O(m / k)$ non-zero entries per column and can be obtained from $A$
        by removing some columns and then setting some entries to zero.
    \end{theorem}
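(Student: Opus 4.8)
The first step is to establish the bridge claimed in the "Our techniques" paragraph: any $(k,\eps)$-RIP-$1$ matrix $A$ is a generalized $(k, O(\eps))$-expander in the sense of Definition~\ref{def_gen_expander}. For the column-norm bound, apply the RIP-$1$ inequality to each standard basis vector $e_j$, which is $1$-sparse (hence $k$-sparse), to get $\|Ae_j\|_1 = \|a_{\cdot j}\|_1 \le 1+\eps$. For the lower bound $\sum_i \max_{j\in S}|a_{ij}| \ge |S|(1-\eps)$, the natural approach is to consider a clever test vector supported on $S$: apply RIP-$1$ to $x = \sum_{j\in S} \sigma_j e_j$ for signs $\sigma_j \in \{\pm 1\}$, and average (or take expectation) over random signs. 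The intuition is that in each row $i$, random signs cause cancellation among the entries $\{a_{ij}\}_{j\in S}$ except for the dominant one, so $\Exp{\sigma}{|\sum_{j\in S}\sigma_j a_{ij}|}$ is comparable to $\max_{j\in S}|a_{ij}|$; summing over $i$ and using $\|Ax\|_1 \ge (1-\eps)\|x\|_1 = (1-\eps)|S|$ yields the desired bound. This sign-cancellation argument is the technical heart and the step I expect to require the most care, since one must control the expectation of $|\sum_j \sigma_j a_{ij}|$ from below by $\max_j |a_{ij}|$ up to constants.

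Once $A$ is known to be a generalized $(k, O(\eps))$-expander, the second step is the sparsification itself. For each column $j$, sort its entries by magnitude and let its \emph{heavy support} be the set of rows $i$ where $|a_{ij}|$ is among the largest, say, $O(m/k)$ entries; set all other entries of that column to zero to form a candidate matrix. The generalized-expansion lower bound, applied with $S = \{j\}$, gives $\sum_i |a_{ij}| \ge 1-\eps$, and the column-norm upper bound gives $\sum_i |a_{ij}| \le 1+\eps$; so each column has total $\ell_1$-mass $\Theta(1)$ concentrated in few large entries. The plan is to argue that discarding the small tail of each column changes $\|Ax\|_1$ by only $O(\eps)\|x\|_1$ for $k$-sparse $x$, preserving RIP-$1$ with a constant-factor-worse distortion.

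The third step handles the column removal. The subtlety is that a single column may have many large entries, so we cannot naively cap every column at $O(m/k)$ nonzeros without losing mass. Instead, I would use a counting/averaging argument: the total number of "large" entries across all $n$ columns is $O(n)\cdot(\text{per-column budget})$, but the generalized-expansion condition forces these large entries to be \emph{spread out} across rows for any $k$-subset, so on average a column has few large entries. Concretely, I would show that only an $\Omega(1)$ fraction of columns can be "dense" (have more than $O(m/k)$ large entries), because the row-budget $m$ is shared; hence we can keep $\Omega(n)$ columns, each genuinely $O(m/k)$-sparse after thresholding, while retaining enough $\ell_1$-mass for the RIP-$1$ guarantee to survive on every $k$-sparse vector supported in the kept columns.

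The main obstacle I foresee is reconciling two competing demands in the last two steps: we need each retained column to have few nonzeros \emph{and} we need to retain a constant fraction of columns \emph{and} the retained submatrix must still satisfy RIP-$1$ (not merely the one-directional expansion bound). The delicate point is that RIP-$1$ for general $k$-sparse vectors requires controlling cancellation across a whole support $S$, not just per-column mass; so after sparsifying I must re-verify both the upper bound $\|Bx\|_1 \le (1+O(\eps))\|x\|_1$ (easy, since zeroing entries only decreases column norms) and the lower bound $\|Bx\|_1 \ge (1-O(\eps))\|x\|_1$ (the hard direction, which should again route through the generalized-expander property of $B$ restricted to the kept heavy entries, via the same sign-averaging identity used in the first step).
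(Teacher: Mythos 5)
Your overall architecture (RIP-$1$ $\Rightarrow$ generalized expander $\Rightarrow$ thresholding $\Rightarrow$ Markov over columns) matches the paper's, but the crucial first step as you describe it does not go through. To deduce $\sum_i\max_{j\in S}|a_{ij}|\ge(1-O(\eps))|S|$ from $\|A x\|_1\ge(1-\eps)|S|$ for random-sign $x$ supported on $S$, you would need, row by row, to bound $\Exp{\sigma}{\left|\sum_{j\in S}\sigma_j a_{ij}\right|}$ from \emph{above} by (roughly) $\max_{j\in S}|a_{ij}|$. But by Khintchine this expectation is comparable to the row $\ell_2$-norm $\bigl(\sum_{j\in S}a_{ij}^2\bigr)^{1/2}$, which exceeds the max by a factor up to $\sqrt{|S|}$ (take a flat row $a_{ij}=1/\sqrt{|S|}$), so the row-wise comparison fails. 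Moreover, even a salvaged constant-factor version $\sum_i\max_j|a_{ij}|\ge c|S|$ with $c<1$ would not suffice: the sparsification needs the mass removed from each block of $k$ columns to be $O(\eps k)$, which requires the full-strength $1-O(\eps)$ constant. The paper uses the random signs in the opposite way: it exhibits one sign vector for which $\|Ax\|_1\le\sum_i\|y_i\|_2$ (with $y_i$ the $i$-th row restricted to $S$), so the RIP lower bound forces $\sum_i\|y_i\|_2\ge(1-\eps)|S|$; combining this with $\sum_i\|y_i\|_1\le(1+\eps)|S|$ (column norms) and the pointwise inequality $\|y\|_\infty\ge\left(1+\tfrac{1}{\sqrt2}\right)\|y\|_2-\tfrac{1}{\sqrt2}\|y\|_1$ gives $\sum_i\|y_i\|_\infty\ge(1-(1+\sqrt2)\eps)|S|$. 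The two global constraints together force most rows to be spiky; no single-row comparison can do this.

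A secondary gap is in your thresholding step: the bounds $1-\eps\le\|a_{\cdot j}\|_1\le1+\eps$ say nothing about a column's mass concentrating on its top $O(m/k)$ entries, so ``discard the small tail of each column'' is unjustified as stated. The paper instead partitions $[n]$ into $n/k$ blocks of size $k$ and, in each row, keeps only the largest-magnitude entry within each block; the generalized-expander bound applied to each block shows the total mass removed is $O(\eps n)$, while counting shows the surviving matrix has at most $mn/k$ nonzeros, and then Markov over columns (as you propose) retains $n/3$ columns that are simultaneously $O(m/k)$-sparse and perturbed by $O(\eps)$. Finally, your worry about re-verifying the RIP lower bound for $B$ via expansion is unnecessary: once each kept column moves by at most $O(\eps)$ in $\ell_1$, one has $\|Bx-Ax\|_1\le O(\eps)\|x\|_1$ for every $x$ supported on kept columns, and both RIP-$1$ inequalities for $B$ follow from those for $A$ by the triangle inequality.
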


    We prove this theorem via the sequence of lemmas. First we prove that for every matrix $A$ there exists a $\pm 1$-vector
    $x$ such that $\|Ax\|_1$ is small.

    \begin{lemma}
        Let $A$ be any $m \times k$ matrix. 
        Then there exists a vector $x \in \set{-1, 1}^k$ such that
        \begin{equation}
            \label{l1_l2}
            \|Ax\|_1 \leq \sum_{i \in [m]} \left(\sum_{j \in [k]} a_{ij}^2\right)^{1/2}. 
        \end{equation}
    \end{lemma}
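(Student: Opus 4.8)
The plan is to use the probabilistic method with independent random signs. I would draw $x_1, \dots, x_k$ independently and uniformly from $\set{-1, 1}$, view $x = (x_1, \dots, x_k)$ as a random $\pm 1$ vector, and bound the expectation $\Exp{}{\norm{1}{Ax}}$. Since some value of a random variable must be no larger than its expectation, it suffices to show that this expectation is at most the right-hand side of~\eqref{l1_l2}; that immediately produces a single deterministic $x \in \set{-1,1}^k$ satisfying the claimed inequality.

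To bound the expectation, I would work one row at a time. By linearity of expectation,
$$
    \Exp{}{\norm{1}{Ax}} = \sum_{i \in [m]} \Exp{}{\left| \sum_{j \in [k]} a_{ij} x_j \right|}.
$$
For a fixed row $i$, the quantity $Z_i = \sum_{j} a_{ij} x_j$ is a Rademacher sum, so by Jensen's inequality (equivalently, Cauchy--Schwarz applied to $\Exp{}{|Z_i| \cdot 1}$) we have $\Exp{}{|Z_i|} \le (\Exp{}{Z_i^2})^{1/2}$. The one small computation is $\Exp{}{Z_i^2}$: expanding the square and using that the $x_j$ are independent with $\Exp{}{x_j} = 0$ and $x_j^2 = 1$, every cross term $\Exp{}{x_j x_{j'}}$ with $j \ne j'$ vanishes, leaving $\Exp{}{Z_i^2} = \sum_{j \in [k]} a_{ij}^2$.

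Combining these estimates gives
$$
    \Exp{}{\norm{1}{Ax}} \le \sum_{i \in [m]} \left(\sum_{j \in [k]} a_{ij}^2\right)^{1/2},
$$
which is exactly the bound in~\eqref{l1_l2}, so the existence of a good $x$ follows at once.

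There is no genuine obstacle here; this is a textbook second-moment argument for Rademacher sums. The only points meriting a moment's care are the direction of Jensen's inequality — we are \emph{upper}-bounding $\Exp{}{|Z_i|}$ by the square root of the second moment, which is precisely why the row-wise $\ell_2$ norm $(\sum_j a_{ij}^2)^{1/2}$ is the right quantity to appear on the right-hand side — and the observation that independence and zero mean annihilate the off-diagonal terms in $\Exp{}{Z_i^2}$, converting the analysis of $\norm{1}{Ax}$ into a clean per-row $\ell_2$ computation.
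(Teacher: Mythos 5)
Your proposal is correct and follows essentially the same route as the paper: the probabilistic method over uniform random signs, linearity of expectation row by row, the bound $\mathrm{E}[|Z_i|] \le (\mathrm{E}[Z_i^2])^{1/2}$, and the vanishing of cross terms to reduce to the row-wise $\ell_2$ norm. No differences worth noting.
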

    \begin{proof}
        Let us use a probabilistic argument. Namely, let us sample all coordinates $x_i$ independently and uniformly at random from 
        $\{-1,1\}$. Then
        \begin{multline*}
            \Exp{}{\|Ax\|_1} = \sum_{i \in [m]} \Exp{}{\left|\sum_{j \in [k]} a_{ij} x_j\right|}
            \leq \sum_{i \in [m]} \left(\Exp{}{\left(\sum_{j \in [k]} a_{ij} x_j\right)^2}\right)^{1/2} =\\=
            \sum_{i \in [m]} \left(\Exp{}{\sum_{j \in [k]} a_{ij}^2 x_j^2}\right)^{1/2} =
            \sum_{i \in [m]} \left(\sum_{j \in [k]} a_{ij}^2\right)^{1/2}.
        \end{multline*}
        Thus, there exists a vector $x \in \set{-1, 1}^k$ that satisfies~(\ref{l1_l2}).
    \end{proof}

    As a trivial corollary we have the following statement.

    \begin{corollary}
        \label{l1_isometry_l2}
        Let $A$ be any $m \times k$ matrix that preserves (up to $1 \pm \eps$) $\ell_1$-norms of \emph{all} vectors.
        Then
        $$
            \sum_{i \in [m]} \left(\sum_{j \in [k]} a_{ij}^2\right)^{1/2} \geq (1 - \eps) k.
        $$
    \end{corollary}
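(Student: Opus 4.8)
The plan is to combine the preceding lemma with the norm-preservation hypothesis, applied to the specific sign vector that the lemma produces. First I would invoke the lemma to obtain a vector $x \in \set{-1, 1}^k$ satisfying inequality~(\ref{l1_l2}), so that $\|Ax\|_1 \le \sum_{i \in [m]} \left(\sum_{j \in [k]} a_{ij}^2\right)^{1/2}$. The key point is that this $x$ is a single concrete vector, to which we are then entitled to apply the hypothesis.

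Next I would use the assumption that $A$ preserves $\ell_1$-norms up to a factor $1 \pm \eps$. Since $x \in \set{-1, 1}^k$ satisfies $\|x\|_1 = k$, the lower half of the norm-preservation guarantee gives $\|Ax\|_1 \ge (1 - \eps)\|x\|_1 = (1 - \eps) k$.

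Chaining these two inequalities then yields
$$
    (1 - \eps) k \leq \|Ax\|_1 \leq \sum_{i \in [m]} \left(\sum_{j \in [k]} a_{ij}^2\right)^{1/2},
$$
which is exactly the claimed bound.

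There is essentially no obstacle here, which is why the statement is labeled a corollary. The only point deserving a moment's care is that the norm-preservation hypothesis is stated for \emph{all} vectors rather than merely for sparse ones: the sign vector $x$ delivered by the lemma is fully supported on $[k]$, so a hypothesis restricted to $k'$-sparse vectors with $k' < k$ would not suffice. The phrasing ``preserves $\ell_1$-norms of \emph{all} vectors'' supplies precisely what is needed, and indeed this is exactly the regime in which the corollary will later be applied to the (full-rank, norm-preserving) blocks arising in the sparsification argument.
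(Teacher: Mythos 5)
Your proof is correct and is exactly the argument the paper intends (the paper leaves it implicit, calling it a "trivial corollary"): apply the lemma's sign vector $x$ with $\|x\|_1 = k$ and chain $(1-\eps)k \le \|Ax\|_1 \le \sum_{i}\bigl(\sum_j a_{ij}^2\bigr)^{1/2}$. Your remark about why the hypothesis must cover all vectors, not just sparse ones, is a sensible observation and consistent with how the corollary is used in Lemma~\ref{RIP_expander}.
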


    The next lemma shows that every $(k, \eps)$-RIP-$1$ matrix is a generalized $(k, O(\eps))$-expander.
    This is a generalization of a theorem from~\cite{BGIKS08}.

    \begin{lemma}
        \label{RIP_expander}
        Let $A$ be any $m \times n$ matrix, which is $(k, \eps)$-RIP-$1$. Then, $A$ is a generalized $(k, 3 \eps)$-expander.
    \end{lemma}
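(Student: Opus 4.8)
The plan is to verify the two defining conditions of a generalized $(k, 3\eps)$-expander separately, deriving each from the RIP-$1$ guarantee applied to a suitably chosen sparse vector. The column-norm condition is immediate: since each standard basis vector $e_j$ is $1$-sparse and hence $k$-sparse, the upper RIP-$1$ bound gives $\norm{1}{a_j} = \norm{1}{Ae_j} \le (1+\eps)\norm{1}{e_j} = 1+\eps \le 1+3\eps$, so every column of $A$ has $\ell_1$-norm at most $1+3\eps$, as required. The real work is the lower bound $\sum_{i\in[m]} \max_{j\in S}|a_{ij}| \ge (1-3\eps)|S|$ for every $S\subseteq[n]$ with $|S|\le k$.

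Fix such an $S$ and consider the submatrix $A_S$ formed by the columns indexed by $S$. The key reduction is that any vector $y\in\Rbb^S$ lifts to an $|S|$-sparse (hence $k$-sparse) vector in $\Rbb^n$ supported on $S$, on which RIP-$1$ acts with $\norm{1}{A_S y}$ equal to the lifted $\norm{1}{Ay}$; thus $A_S$ preserves the $\ell_1$-norm of \emph{every} vector in $\Rbb^S$ up to $1\pm\eps$. This is exactly the hypothesis of Corollary~\ref{l1_isometry_l2} (with $k$ replaced by $|S|$), which therefore yields the $\ell_2$-flavored estimate $\sum_{i\in[m]} \bigl(\sum_{j\in S} a_{ij}^2\bigr)^{1/2} \ge (1-\eps)|S|$.

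It remains to convert this into the $\max$-type quantity defining generalized expansion. Writing $M_i = \max_{j\in S}|a_{ij}|$ and $C_i = \sum_{j\in S}|a_{ij}|$ for each row, the crucial row-wise estimate is $\sum_{j\in S} a_{ij}^2 \le M_i \sum_{j\in S}|a_{ij}| = M_i C_i$, obtained by bounding $a_{ij}^2 = |a_{ij}|\cdot|a_{ij}| \le M_i|a_{ij}|$ termwise. Hence $\bigl(\sum_{j\in S}a_{ij}^2\bigr)^{1/2} \le \sqrt{M_i C_i}$, and summing over rows followed by Cauchy--Schwarz gives $(1-\eps)|S| \le \sum_i \sqrt{M_i C_i} \le \bigl(\sum_i M_i\bigr)^{1/2}\bigl(\sum_i C_i\bigr)^{1/2}$. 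Since $\sum_i C_i = \sum_{j\in S}\norm{1}{a_j} \le (1+\eps)|S|$ by the column-norm bound already established, squaring and rearranging produces $\sum_i M_i \ge \frac{(1-\eps)^2}{1+\eps}|S|$; an elementary computation then shows $\frac{(1-\eps)^2}{1+\eps}\ge 1-3\eps$ (equivalently $4\eps^2\ge 0$), which completes the argument.

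I expect the main obstacle to be precisely this final conversion. The RIP-$1$ hypothesis, through Corollary~\ref{l1_isometry_l2}, naturally controls an $\ell_2$-type per-row sum, whereas generalized expansion is phrased through per-row maxima; these coincide only when each row concentrates its mass in $S$ on a single entry. The identity $a_{ij}^2 \le M_i|a_{ij}|$ paired with Cauchy--Schwarz is exactly what bridges this gap, and keeping careful track of the constants is what yields the clean factor $3\eps$ rather than something weaker.
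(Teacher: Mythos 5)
Your proof is correct, and it takes a genuinely different route from the paper's at the key step. Both arguments start the same way: the column-norm bound follows from applying RIP-$1$ to the basis vectors, and Corollary~\ref{l1_isometry_l2} (applied to the submatrix $A_S$, which preserves the $\ell_1$-norm of every vector supported on $S$) supplies the lower bound $\sum_i \|y_i\|_2 \geq (1-\eps)|S|$ on the row-wise $\ell_2$ sums, where $y_i = (a_{ij})_{j \in S}$. The divergence is in how one converts control of $\sum_i \|y_i\|_2$ into control of $\sum_i \|y_i\|_\infty$. The paper proves and applies the pointwise linear inequality $\|y\|_1 - \|y\|_\infty \leq (1+\tfrac{1}{\sqrt 2})(\|y\|_1 - \|y\|_2)$ row by row (Lemma~\ref{l1_l2_linfty}), whose proof requires a somewhat delicate case analysis over the value of $\|y\|_\infty/\|y\|_1$; summing over rows then gives the deficiency $(1+\sqrt 2)\eps < 3\eps$. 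You instead use the standard interpolation $\|y_i\|_2^2 \leq \|y_i\|_\infty \|y_i\|_1$ per row followed by a single Cauchy--Schwarz across rows, $\sum_i \sqrt{M_i C_i} \leq (\sum_i M_i)^{1/2}(\sum_i C_i)^{1/2}$, which combines with $\sum_i C_i \leq (1+\eps)|S|$ to give $\sum_i M_i \geq \frac{(1-\eps)^2}{1+\eps}|S| \geq (1-3\eps)|S|$; I checked this last inequality and it does reduce to $4\eps^2 \geq 0$ as you claim. Your version is more elementary and self-contained, since it avoids proving the sharp three-norm inequality entirely; the paper's version buys a marginally better constant ($(1+\sqrt 2)\eps$ versus your $3\eps - \frac{4\eps^2}{1+\eps}$), though both suffice for the lemma as stated.
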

    \begin{proof}
        For the proof we need the following lemma.
        \begin{lemma}
            \label{l1_l2_linfty}
            For any $y \in \Rbb^k$ 
            \begin{equation}
                \label{l1_l2_linfty_eq}
                \|y\|_1 - \|y\|_{\infty} \leq \left(1 + \frac{1}{\sqrt{2}}\right)(\|y\|_1 - \|y\|_{2}).
            \end{equation}
        \end{lemma}
        \begin{proof}
            Clearly, if $y = 0$, then the desired inequality is trivial.
            Otherwise, by homogenity we can assume that $\|y\|_1 = 1$.
            If $\|y\|_{\infty} = 1$, then $\|y\|_2 = 1$, and both sides of~(\ref{l1_l2_linfty_eq})
            are equal to zero.
            So, we can assume that $\|y\|_{\infty} < 1$.
            Suppose that $\|y\|_{\infty} = t$ for some $t \in (0; 1)$.
            If $1 / n > t \geq 1 / (n + 1)$ (thus, $n = \lceil 1/t - 1 \rceil$) for some positive integer $n$, then, clearly,
            $\|y\|_2 \leq \sqrt{nt^2 + (1 - nt)^2}$.
            One can check using elementary analysis that for every $t \in (0; 1)$
            $$
                \frac{1 - \|y\|_{\infty}}{1 - \|y\|_2} \leq 
                \frac{1 - t}{1 - \sqrt{\left\lceil \frac{1}{t} - 1\right\rceil t^2
                    + \left(1 - \left \lceil \frac{1}{t} - 1\right \rceil t\right)^2
                }}
                \leq 1 + \frac{1}{\sqrt{2}} 
            $$
            (equality is attained on $t = 1/2$). This concludes the proof.
        \end{proof}

        Let $S \subseteq [n]$ be any subset of size at most $k$.
        For any $i \in [m]$ let us denote $y_i = (a_{ij})_{j \in S} \in \Rbb^S$.

        We have
    \begin{eqnarray*}
            \sum_{i \in [m]} \|y_i\|_{\infty} & \geq & 
            \left(1 + \frac{1}{\sqrt{2}} \right)
            \sum_{i \in [m]} \|y_i\|_2 - \frac{1}{\sqrt{2}} \cdot \sum_{i \in [m]} \|y_i\|_1   \mbox{\ \ \ \ (by Lemma~\ref{l1_l2_linfty})} \\
            & \geq &  \left(1 + \frac{1}{\sqrt{2}}\right) (1 - \eps)|S| - \frac{1}{\sqrt{2}} \cdot (1 + \eps) |S|   \mbox{\ \ \ \ (by Corollary~\ref{l1_isometry_l2} and RIP-$1$)} \\
            & = & (1 - (1 + \sqrt{2}) \eps)|S|.
        \end{eqnarray*}
        So, $A$ is a generalized $(k, (1 + \sqrt{2}) \eps)$-expander.
        Since $1 + \sqrt{2} < 3$, this concludes the proof.

    \end{proof}

    Finally, we prove Theorem~\ref{sparsification}.

    \begin{proof}[Proof of Theorem~\ref{sparsification}]
        By Lemma~\ref{RIP_expander} $A$ is a generalized $(k, 3 \eps)$-expander.
        Let us partition $[n]$ into $n / k$ disjoint sets of size $k$ arbitrarily: $[n] = S_1 \cup S_2 \cup \ldots
        \cup S_{n / k}$.
        Now for every $i \in [m]$ and every $S_t$ let us zero out all the entries
        $a_{ij}$ for $j \in S_t$ except one with the largest absolute value.
          Let $A'$ be the resulting matrix.
        Since $A$ is a generalized $(k, 3 \eps)$-expander, we know that the (vector) $\ell_1$ norm of the difference $A-A'$  is at most $3 \eps n$.
        Thus, each column of $A-A'$ has the $\ell_1$ norm of at most  $3 \eps$ \emph{on the average}.
        The number of non-zero entries in $A'$is at most $mn / k$, so a column has at most $m / k$ non-zero
        entries \emph{on the average}. Thus, by Markov inequality there is a set of $n / 3$ columns such that we have moved
        each of them by at most $9 \eps$ and each of them contains at most $3m / k$ non-zero entries.
        We define a matrix $B$ that consists of these columns.
        Since we have modified each of these columns by at most $9 \eps$ and $A$ is $(k, \eps)$-RIP-$1$ we have that
        $B$ is $(k, 10 \eps)$-RIP-$1$.
    \end{proof}

    The following theorem is a straightforward generalization of Theorem~\ref{sparsification}. It can be proved via literally the same argument.

    \begin{theorem}
        \label{model_sparsification}
        Suppose that a model $\Mc_k \subseteq \Sigma_k$ has the following properties:
        \begin{itemize}
            \item for some $l \leq k$ all sets from $\Sigma_l$ are $\Mc_k$-sparse;
            \item there exists a partition of an $\Omega(1)$-fraction of $[n]$ into disjoint subsets of size
            $\Omega(k)$ such that each of these subsets is $\Mc_k$-sparse.
        \end{itemize}
        Then if $A$ is an $m \times n$ matrix which is
        $\eps$-$\Mc_k$-RIP-$1$ for some sufficiently small $\eps >0$, there exists an $m \times \Omega(n)$ matrix $B$ which is
        $(l, O(\eps))$-RIP-$1$, has at most $O(m / k)$ non-zero entries per column and can be obtained from $A$
        by removing some columns and then setting some entries to zero.
    \end{theorem}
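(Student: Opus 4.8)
The plan is to mirror the proof of Theorem~\ref{sparsification} almost verbatim, substituting ``$\Mc_k$-sparse'' for ``of size at most $k$'' wherever the original argument restricts $A$ to a set of columns. The first step is to establish the model analogue of Lemma~\ref{RIP_expander}: that an $\eps$-$\Mc_k$-RIP-$1$ matrix $A$ is a generalized $\eps$-$\Mc_k$-expander, i.e.\ that the expansion bound $\sum_{i \in [m]} \max_{j \in S} |a_{ij}| \geq (1 - O(\eps))|S|$ holds for every $\Mc_k$-sparse set $S$, and that every column has $\ell_1$-norm at most $1 + O(\eps)$. The proof of Lemma~\ref{RIP_expander} only ever restricts $A$ to the columns indexed by a set $S$ and then applies Corollary~\ref{l1_isometry_l2} together with the RIP-$1$ bounds to vectors supported on $S$; both facts remain valid precisely when every vector supported on $S$ is $\Mc_k$-sparse, i.e.\ when $S$ is $\Mc_k$-sparse. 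Hence the same computation yields the generalized $\eps$-$\Mc_k$-expander property, and the column-norm bound follows by applying RIP-$1$ to each standard basis vector $e_j$, which is $\Mc_k$-sparse since every member of $\Sigma_l$ is $\Mc_k$-sparse (taking $l \geq 1$).

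Next I would carry out the sparsification step driven by the second hypothesis. By assumption there is a partition of an $\Omega(1)$-fraction of $[n]$ into disjoint $\Mc_k$-sparse blocks $S_1, \dots, S_r$, each of size $\Omega(k)$; I discard the columns outside $\bigcup_t S_t$ (this still leaves $\Omega(n)$ columns) and, exactly as before, zero out in each row $i$ and each block $S_t$ every entry except the one of largest absolute value, obtaining $A'$. For a single block $S_t$, the total $\ell_1$-mass removed equals $\sum_{i} (\|y_i\|_1 - \|y_i\|_\infty)$ with $y_i = (a_{ij})_{j \in S_t}$; summing the column-norm bound gives $\sum_i \|y_i\|_1 \leq (1 + O(\eps))|S_t|$ while the generalized-expander bound gives $\sum_i \|y_i\|_\infty \geq (1 - O(\eps))|S_t|$, so at most $O(\eps)|S_t|$ mass is removed from $S_t$. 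Since the blocks are disjoint, the total removed mass is $O(\eps) \sum_t |S_t| = O(\eps n)$, spread over $\Omega(n)$ retained columns, so the average column is displaced by $O(\eps)$ in $\ell_1$; and since $A'$ has at most $m$ nonzeros per block and $r = \Theta(n/k)$ blocks, the average retained column has $O(m/k)$ nonzeros. Two applications of Markov's inequality then isolate a set of $\Omega(n)$ columns each displaced by $O(\eps)$ and each carrying $O(m/k)$ nonzeros; these columns form $B$.

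Finally I would verify that $B$ is $(l, O(\eps))$-RIP-$1$, and this is exactly where the first hypothesis is used (and why the guarantee degrades from $k$ to $l$). Take any $l$-sparse $x$ supported on the retained columns and view it as a vector in $\Rbb^n$; its support is a set in $\Sigma_l$, hence $\Mc_k$-sparse, so $A$ preserves its $\ell_1$-norm up to $1 \pm \eps$. Because $B$ agrees with $A$ on the retained columns up to a per-column $\ell_1$-displacement of $O(\eps)$, we get $\|Bx - Ax\|_1 \leq \sum_j |x_j| \, \|b_{\cdot j} - a_{\cdot j}\|_1 \leq O(\eps)\|x\|_1$, and combining with the RIP bound on $\|Ax\|_1$ shows $\|Bx\|_1 = (1 \pm O(\eps))\|x\|_1$.

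I expect the only real subtlety, as opposed to bookkeeping, to be the two places where the hypotheses are genuinely load-bearing: the second property is needed so that the expander bound can be applied block-by-block (arbitrary size-$k$ blocks need not be $\Mc_k$-sparse, unlike in Theorem~\ref{sparsification}), and the first property is needed so that the surviving $l$-sparse directions fall inside the model and are therefore preserved by $A$. The remaining care is in tracking the implicit constants from the $\Omega(1)$-fraction and the $\Omega(k)$ block size through the Markov step, and in requiring $\eps$ small enough that the accumulated $O(\eps)$ distortion stays below $1$ --- which is precisely the ``sufficiently small $\eps$'' in the statement.
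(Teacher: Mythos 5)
Your proposal is correct and follows essentially the same route as the paper, which simply states that Theorem~\ref{model_sparsification} ``can be proved via literally the same argument'' as Theorem~\ref{sparsification}; you have correctly identified the two places where the model hypotheses are load-bearing (the $\Mc_k$-sparse partition for the block-wise zeroing, and $\Sigma_l$-sparsity for the final RIP-$1$ guarantee on $B$), and the bookkeeping is sound.
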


    \section{Lower bounds for model-based RIP-$1$ matrices}

    In this section we prove lower bounds on the number of rows for
    $\Bc_{k,b}$- and $\Tc_k$-RIP-$1$ matrices.

    This is done using the following general theorem.

    \begin{theorem}
        \label{general_lower_bound}
        If a model $\Mc_k \subseteq \Sigma_k$ satisfies the statement of Theorem~\ref{model_sparsification}
        and $A$ is an $m \times n$ matrix which is $\eps$-$\Mc_k$-RIP-$1$ for some sufficiently small  $\eps >0$, then
        $$
            m = \Omega\left(k \cdot \frac{\log(n / k)}{\log(k / l)}\right).
        $$
    \end{theorem}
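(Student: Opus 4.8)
The plan is to reduce, via Theorem~\ref{model_sparsification}, to a lower bound for a \emph{sparse} ordinary RIP-$1$ matrix, and then run a packing (volume) argument in the range $\Rbb^m$ that crucially exploits the sparsity of the columns. First I would apply Theorem~\ref{model_sparsification}: since $\Mc_k$ satisfies its hypotheses and $A$ is $\eps$-$\Mc_k$-RIP-$1$ with $\eps$ small, we obtain an $m \times N$ matrix $B$ with $N = \Omega(n)$ columns that is $(l, O(\eps))$-RIP-$1$ and has at most $d := O(m/k)$ non-zero entries per column. The whole difficulty is now to show that such a $B$ forces $m$ to be large; note that the column sparsity $d$ must be used, since the plain $(l, O(\eps))$-RIP-$1$ property alone yields only the weaker bound $m = \Omega(l \log(N/l))$.

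Next I would set up a well-separated family of inputs using a standard constant-weight (Gilbert--Varshamov) code. Let $\mathcal{C} \subseteq \set{0,1}^N$ consist of vectors of Hamming weight $l/2$ with pairwise Hamming distance $\Omega(l)$; such a code exists with $\log_2 |\mathcal{C}| = \Omega(l \log(N/l))$. For distinct $x, x' \in \mathcal{C}$ the difference $x - x'$ is $l$-sparse with $\|x - x'\|_1 = \Omega(l)$, so the $(l, O(\eps))$-RIP-$1$ property of $B$ gives $\|Bx - Bx'\|_1 = \Omega(l)$, while $\|Bx\|_1 = O(l)$ for every $x \in \mathcal{C}$. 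Thus $\set{Bx : x \in \mathcal{C}}$ is a set of $|\mathcal{C}|$ points that are pairwise $\Omega(l)$-separated inside an $\ell_1$-ball of radius $O(l)$.

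The key geometric observation, and the step that makes the sparsity of $B$ pay off, is that each image $Bx$ is supported on $\bigcup_{j \in \supp(x)} \supp(b_j)$, a set of size at most $(l/2) \cdot d$. Hence every image lies in one of the $\binom{m}{\le dl/2}$ coordinate subspaces of dimension $\le dl/2$. A volume argument inside a single such subspace shows that an $\ell_1$-ball of radius $O(l)$ contains at most $2^{O(dl)}$ points that are pairwise $\Omega(l)$-separated, so covering the images by these subspaces gives $|\mathcal{C}| \le \binom{m}{\le dl/2} \cdot 2^{O(dl)} = 2^{O(dl \log(m/(dl)))}$. Combining this with the lower bound on $|\mathcal{C}|$ and with $N = \Omega(n)$ yields $\log(n/l) = O\big(d \log(m/(dl))\big)$.

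Finally I would solve for $m$. Since the right-hand side $d \log(m/(dl))$ is increasing in $d$ in the relevant regime $dl < m/e$, I can substitute the constraint $d \le O(m/k)$, which makes $m/(dl) = \Theta(k/l)$, to obtain $\log(n/l) = O\big(\tfrac{m}{k} \log(k/l)\big)$, i.e. $m = \Omega\big(k \log(n/l)/\log(k/l)\big)$; as $l \le k$ this is at least $\Omega\big(k \log(n/k)/\log(k/l)\big)$, as claimed. I expect the main obstacle to be the third step: correctly arguing that the images are confined to few low-dimensional subspaces so that the packing count beats the trivial $2^{O(m)}$, together with the monotonicity substitution that converts the upper bound on the sparsity $d$ into the desired lower bound on $m$ (a case distinction is likely needed when $k/l = O(1)$, where the statement degenerates to the standard $\Omega(k\log(n/k))$ bound).
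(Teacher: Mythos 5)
Your proposal is correct and follows essentially the same route as the paper: it sparsifies via Theorem~\ref{model_sparsification} and then reproves, inline, exactly the content of Theorem~\ref{sparsity_m_tradeoff} (the counting argument from~\cite{N10}) — a well-separated family of $l$-sparse vectors, confinement of the images to $\binom{m}{\le dl}$ low-dimensional coordinate subspaces via column sparsity, the $\ell_1$-volume packing bound, and the final substitution $d = O(m/k)$. The only differences are cosmetic (your covering-and-summing count versus the paper's pigeonhole, and your explicit monotonicity check for the substitution, which the paper glosses over).
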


    The proof is a combination of Theorem~\ref{model_sparsification} and a counting argument similar to one used in~\cite{N10}.

    First, we need the following standard geometric fact.

    \begin{theorem}
        \label{volume_bound}
        Let $v_1, v_2, \ldots, v_n \in \Rbb^d$ be a set of $d$-dimensional vectors such that
        \begin{itemize}
            \item for every $i \in [n]$ we have $\|v_i\|_1 \leq 1.1$;
            \item for every $i \ne j \in [n]$ we have $\|v_i - v_j\|_1 \geq 0.9$.
        \end{itemize}
        Then, $n \leq 4^d$.
    \end{theorem}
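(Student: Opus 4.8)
The plan is to use a standard volumetric packing argument in the $\ell_1$ metric. The two hypotheses say precisely that the points $v_1, \ldots, v_n$ all lie in the $\ell_1$-ball of radius $1.1$ around the origin and that they are pairwise $0.9$-separated in $\ell_1$. From the separation condition I would first deduce that the open $\ell_1$-balls of radius $\rho := 0.45$ centered at the $v_i$ are pairwise disjoint: if some point $x$ lay in two of them, the triangle inequality would give $\|v_i - v_j\|_1 \leq \|v_i - x\|_1 + \|x - v_j\|_1 < 2\rho = 0.9$, contradicting the assumed lower bound on $\|v_i - v_j\|_1$.

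Next I would observe that each of these small balls is contained in a single large ball centered at the origin. Since $\|v_i\|_1 \leq 1.1$ and every point of the ball of radius $\rho$ around $v_i$ is within $\ell_1$-distance $0.45$ of $v_i$, the triangle inequality places all of these small balls inside the $\ell_1$-ball of radius $1.1 + 0.45 = 1.55$ around the origin.

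Finally I would compare $d$-dimensional Lebesgue volumes. The $n$ pairwise disjoint balls of radius $\rho = 0.45$ all fit inside the ball of radius $1.55$, so $n \cdot \mathrm{Vol}(B_1(0.45)) \leq \mathrm{Vol}(B_1(1.55))$, where $B_1(r)$ denotes the $\ell_1$-ball of radius $r$. Because dilating a region by a factor $r$ multiplies its $d$-dimensional volume by $r^d$, the volume of an $\ell_1$-ball of radius $r$ equals $r^d$ times the fixed unit-ball constant, and that constant cancels from both sides. This leaves $n \leq (1.55/0.45)^d = (31/9)^d$, and since $31/9 < 4$ we conclude $n \leq 4^d$, as claimed. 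There is no genuine obstacle here beyond verifying that the chosen radii produce a ratio strictly below $4$; the gap between the separation $0.9$ and the norm bound $1.1$ in the hypotheses is exactly what forces the packing constant down to $4$.
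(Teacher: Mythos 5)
Your proposal is correct and is essentially identical to the paper's proof: the same packing argument with disjoint $\ell_1$-balls of radius $0.45$ contained in the ball of radius $1.55$, yielding $n \leq (1.55/0.45)^d < 4^d$. The only (harmless) difference is that you spell out the triangle-inequality steps and use open balls to get strict disjointness, which the paper leaves implicit.
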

    \begin{proof}
        Denote $B(x, r)$ the ball in $\ell_1$-metric with center $x$ and
        radius $r$.
        Consider the balls $B_i = B(v_i, 0.45)$. On the one hand, these balls
        are disjoint, on the other hand, they lie within $B(0, 1.55)$. Thus, if
        we consider the balls' volumes we see that
        $$
            n \leq \left(\frac{1.55}{0.45}\right)^d < 4^d.
        $$
    \end{proof}

    The next theorem shows a tradeoff between $m$ and column sparsity for any RIP-$1$ matrix.
    Its variant was proved in~\cite{N10}, but we present here the proof for the sake of completeness.

    \begin{theorem}[\cite{N10}]
        \label{sparsity_m_tradeoff}
        Let $A$ be an $m \times n$ matrix, which is $(k, \eps)$-RIP-$1$ for some sufficiently small $\eps>0$.
        Moreover, suppose that every column of $A$
        has at most $s$ non-zero entries.
        Then 
        $$
            s \log \left(\frac{m}{sk}\right) = \Omega\left(\log\left(\frac{n}{k}\right)\right).
        $$
    \end{theorem}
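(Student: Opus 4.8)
The plan is to combine the generalized-expander structure of $A$ with the volume packing bound of Theorem~\ref{volume_bound}, counting the columns of $A$ according to which rows their supports occupy. First I would record the three features of the columns that the RIP-$1$ hypothesis supplies. Writing $a_j$ for the $j$-th column, applying RIP-$1$ to $e_j$ gives $\|a_j\|_1 \in [1-\eps, 1+\eps]$, and applying it to $e_i - e_j$ (a $2$-sparse vector, valid once $k \ge 2$) gives $\|a_i - a_j\|_1 \ge 2(1-\eps)$; by hypothesis each $a_j$ has at most $s$ nonzero entries. After a harmless global rescaling these say that the columns are essentially unit vectors in $\ell_1$, pairwise separated by a constant, and $s$-sparse --- precisely the setting of Theorem~\ref{volume_bound}, which then bounds the number of columns whose support lies inside any fixed set $R$ of rows by $4^{|R|}$.

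The second ingredient is a much sharper ``collision'' bound for small $R$, coming from Lemma~\ref{RIP_expander}: $A$ is a generalized $(k,3\eps)$-expander. If $p$ columns all have support inside a set $R$ with $|R| < k$, I would feed any $\min(p,k)$ of them as the set $S$ in Definition~\ref{def_gen_expander}; since $\max_{j \in S}|a_{ij}| = 0$ off $R$ and is at most $1+\eps$ on $R$, the expander inequality reads $(1-3\eps)\min(p,k) \le (1+\eps)|R|$. For $\eps$ small and $|R| \le 0.8k$ this forces $\min(p,k) < k$, hence $p < k$, and in fact $p \le (1+\eps)|R|/(1-3\eps) = O(|R|)$. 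So inside any row-set of size $O(k)$ there are only $O(k)$ columns --- exponentially fewer than the generic volume bound would allow.

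Next I would bucket the $m$ rows into $m/t$ groups of size $t = \Theta(k/s)$, chosen so that any union of $s$ buckets has size $st \le 0.8k$. Each column, having at most $s$ nonzeros, touches at most $s$ buckets, so it receives a bucket signature, an $\le s$-subset of the $m/t$ buckets; there are at most $\binom{m/t}{s}$ signatures. All columns sharing a signature have support inside the corresponding $\le st \le 0.8k$ rows, so by the collision bound there are at most $O(k)$ of them. Multiplying the number of signatures by the number of columns per signature yields $n = O\!\big(k \binom{m/t}{s}\big) \le O(k)\cdot(em/(st))^{s}$, and taking logarithms and substituting $st = \Theta(k)$ gives $\log(n/k) = O\big(s \log(m/(sk))\big)$, which rearranges to the claimed tradeoff.

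The main obstacle is getting the right quantitative balance rather than any single estimate: the two natural extremes --- counting each row separately (many signatures, tight per-signature control) or bucketing into blocks of size $k$ (few signatures, useless per-signature control) --- both lose the factor of $k$ that the theorem saves. The crux is therefore choosing the bucket size $t \approx k/s$ so that the expander collision bound still applies (requiring $st \lesssim k$) while the signature count is as small as possible; this regime is the only delicate point, and the boundary case $s = \Omega(k)$, where no valid bucket size exists, must be disposed of separately.
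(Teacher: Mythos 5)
Your approach is genuinely different from the paper's (which packs $2^{\Omega(k\log(n/k))}$ well-separated $k/2$-sparse vectors, pigeonholes their images over the $\binom{m}{sk}$ possible supports of $Ax$, and applies the volume bound in dimension $sk$), and the structural ingredients you use --- column separation from RIP-$1$ applied to $e_i - e_j$, and the $O(|R|)$ collision bound from the generalized expander property --- are correct. But the final accounting does not yield the stated theorem. The number of signatures is $\binom{m/t}{s} \le (em/(ts))^s$, and with $ts = \Theta(k)$ this is $(O(m/k))^s$, so what your argument actually produces is $s\log(m/k) = \Omega(\log(n/k))$; the step where you write $s\log(m/(sk))$ is an algebra slip, since $st=\Theta(k)$ makes the base $m/k$, not $m/(sk)$. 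This is not a cosmetic loss: the theorem is invoked in Theorem~\ref{general_lower_bound} with $s = \Theta(m/k)$, where $\log(m/(sl)) = \log(k/l)$ is precisely the denominator that produces the final bound $m = \Omega(k\log(n/k)/\log(k/l))$; with $\log(m/l)$ in its place the conclusion degrades to $m = \Omega(k \log(n/l)/\log(m/l))$, which can be as weak as $m=\Omega(k)$.

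The loss looks inherent to counting single columns: a column carries only $\log n$ bits of identity, and its support selects $s$ buckets out of $m/t$, a ratio of $m/(ts) = O(m/k)$ once you are forced to take $ts \le 0.8k$ so that the expander collision bound applies per signature class. To make the ratio $m/(sk)$ appear, the counted object must occupy up to $sk$ rows out of all $m$ --- which is exactly what the image of a $k$-sparse vector does --- so one is pushed back toward the paper's strategy of packing exponentially many $k$-sparse vectors rather than the $n$ columns. (You also flag but do not resolve the case $s=\Omega(k)$, where no admissible bucket size exists; that is secondary to the quantitative gap above.)
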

    \begin{proof}
        We need a lemma from~\cite{N10}, which is proved by a standard probabilistic argument.
        \begin{lemma}
            There exists a set $X \subseteq \Rbb^n$ of $k / 2$-sparse vectors such that
            \begin{itemize}
                \item $\log |X| = \Omega(k \log(n / k))$;
                \item every vector from $X$ has a unit $\ell_1$-norm;
                \item all pairwise $\ell_1$-distances between the elements of $X$ are at least $1$.
            \end{itemize}
        \end{lemma}

        Now let us see how $A$ acts on the elements of $X$.
        Clearly, for every $x \in X$ the vector $Ax$ is $sk$-sparse.
        By pigeonhole principle we have that for some $S \subseteq [m]$ with $|S| \leq sk$
        there exists a subset $X' \subseteq X$ with
        \begin{equation}
            \label{php}
            |X'| \geq \frac{|X|}{\binom{m}{sk}}
        \end{equation}
        such that for every $x \in X'$ the support of $Ax$ lies within $S$.

        On the other hand, since $A$ is $(k, \eps)$-RIP-$1$ one can easily see that the
        set $\set{Ax}_{x \in X'}$ (which lies in the $sk$-dimensional subspace) has the following properties:
        \begin{itemize}
            \item
                every vector from the set has $\ell_1$-norm at most $1 + \eps$;
            \item
                all pairwise distances are at least $1 - \eps$.
        \end{itemize}
        Since this set lies in the $sk$-dimensional subspace by Theorem~\ref{volume_bound} its cardinality is
        bounded by $4^{sk}$ (provided that $\eps$ is sufficiently small).
        Thus, we have by plugging this bound into~(\ref{php})
        $$
            \frac{2^{\Omega(k \log(n / k))}}{\binom{m}{sk}} \leq 4^{sk}. 
        $$
        Now by using a standard estimate $\binom{m}{sk} \leq 2^{O(sk \log(m / sk))}$ we have the desired statement.
    \end{proof}

    Now we can finish the proof of Theorem~\ref{general_lower_bound}.

    \begin{proof}[Proof of Theorem~\ref{general_lower_bound}]
        By Theorem~\ref{model_sparsification} we can get an $m \times \Omega(n)$ matrix $A$
        with column sparsity $s = O(m / k)$ and which is $(l, O(\eps))$-RIP-$1$. 
        Then applying Theorem~\ref{sparsity_m_tradeoff} we have
        $s \log(m / sl) = \Omega(\log(n / l))$. Since, $s = O(m / k)$ we get
        the desired bound
        $$
            m = \Omega\left(k \cdot \frac{\log(n / k)}{\log(k / l)}\right).
        $$
    \end{proof}

    Next we apply Theorem~\ref{general_lower_bound} to $\Bc_{k, b}$- and $\Tc_k$-RIP-$1$ matrices.

    \begin{theorem}
        For any $k \geq 2 b$ and sufficiently small $\eps>0$ if $A$ is an $m \times n$ matrix which is $\eps$-$\Bc_{k, b}$-RIP-$1$,
        then $m = \Omega(k \cdot (1 + \log_b (n / k)))$.
    \end{theorem}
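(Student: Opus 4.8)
The plan is to derive both terms of the bound separately and take their maximum. The factor $k \cdot \log_b(n/k)$ will come straight from Theorem~\ref{general_lower_bound}, once I check that the block-sparse model $\Bc_{k, b}$ meets the two hypotheses of Theorem~\ref{model_sparsification} with the choice $l = k/b$. The additive $\Omega(k)$ term (the ``$1+$'' in the statement) does \emph{not} follow from Theorem~\ref{general_lower_bound}, because when $b$ is close to $k$ and $n/k$ is small the ratio $\log(n/k)/\log(k/l) = \log_b(n/k)$ drops below $1$; so I will prove $m = \Omega(k)$ by a direct geometric packing argument. Combining, since $\max(k, k\log_b(n/k)) \ge \tfrac12 k(1 + \log_b(n/k))$, this gives the claim. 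Throughout I assume $b \ge 2$, so that $\log b > 0$; the case $b = 1$ is just the standard $\Sigma_k$ bound.

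First I verify the hypotheses of Theorem~\ref{model_sparsification} with $l = k/b$. For the first bullet, any set $S$ with $|S| = l = k/b$ meets at most $k/b$ of the blocks $B_1, \dots, B_{n/b}$, hence is contained in a union of $k/b$ blocks, i.e.\ in some element of $\Bc_{k, b}$; thus every set in $\Sigma_{k/b}$ is $\Bc_{k, b}$-sparse. For the second bullet, I partition the $n/b$ blocks into $\lfloor n/k \rfloor$ groups of $k/b$ consecutive blocks each. Each group is, by definition, an element of $\Bc_{k, b}$ of size exactly $k$, hence a $\Bc_{k, b}$-sparse set of size $\Omega(k)$, and these groups are disjoint and cover $\lfloor n/k\rfloor \cdot k \ge n - k$ coordinates, an $\Omega(1)$-fraction of $[n]$ (when $n = \Omega(k)$; the remaining case $n = O(k)$ is absorbed into the $\Omega(k)$ term below). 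Feeding $l = k/b$ into Theorem~\ref{general_lower_bound} yields $m = \Omega\!\big(k \log(n/k)/\log(k/l)\big) = \Omega\!\big(k \log(n/k)/\log b\big) = \Omega(k \log_b(n/k))$.

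It remains to show $m = \Omega(k)$, which I expect to be the main obstacle, since neither expansion (Lemma~\ref{RIP_expander}) nor the sparsity tradeoff (Theorem~\ref{sparsity_m_tradeoff}) gives anything useful on a ground set of only $k$ coordinates: both degrade to vacuity once $n/k$ is replaced by $1$. Instead I fix a single block-union $T \in \Bc_{k, b}$, so that every vector supported on $T$ is $\Bc_{k, b}$-sparse, and exhibit a large well-separated net inside $\Rbb^T$. Using a Gilbert--Varshamov-type bound I pick $2^{\Omega(k)}$ sign patterns $\sigma \in \set{-1, 1}^k$ with pairwise Hamming distance at least $k/4$, and set $v_\sigma = \sigma / k$. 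These are unit-$\ell_1$ vectors supported on $T$ with pairwise $\ell_1$-distance at least $1/2$. Applying $\eps$-$\Bc_{k, b}$-RIP-$1$ to each $v_\sigma$ and to each difference $v_\sigma - v_{\sigma'}$ (both $\Bc_{k, b}$-sparse), the images $A v_\sigma \in \Rbb^m$ have $\ell_1$-norm at most $1 + \eps$ and pairwise $\ell_1$-distance at least $(1-\eps)/2$. A volume argument identical to the proof of Theorem~\ref{volume_bound} --- disjoint $\ell_1$-balls of radius $(1-\eps)/4$ inside a ball of radius $1 + \eps + (1-\eps)/4$ in $\Rbb^m$ --- bounds their number by $C^m$ for an absolute constant $C$, so $2^{\Omega(k)} \le C^m$ and hence $m = \Omega(k)$.

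Putting the two bounds together gives $m = \Omega(\max(k, k\log_b(n/k))) = \Omega(k(1 + \log_b(n/k)))$, as required. The only delicate point is the stand-alone $\Omega(k)$ estimate: the key realization is that it is a purely $k$-dimensional statement about near-isometric $\ell_1$ embeddings, for which the correct tool is a $2^{\Omega(k)}$-size packing of the cross-polytope together with the volume bound, rather than any property tied to the large ambient dimension $n$.
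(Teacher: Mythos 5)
Your proof is correct and follows essentially the same route as the paper: verify the hypotheses of Theorem~\ref{model_sparsification} with $l = k/b$, invoke Theorem~\ref{general_lower_bound} for the $\Omega(k\log_b(n/k))$ term, and combine with a separate $\Omega(k)$ bound. The paper simply declares the $\Omega(k)$ bound ``obvious''; your packing-plus-volume argument is a valid way to supply it, though it can be done even more cheaply by noting that $A$ restricted to the $k$ columns of a single $T \in \Bc_{k,b}$ is injective (since $\|Ax\|_1 \geq (1-\eps)\|x\|_1 > 0$ for $x \neq 0$ supported on $T$), hence has rank $k$, forcing $m \geq k$.
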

    \begin{proof}
        Clearly, $\Bc_{k, b}$ satisfies the conditions of Theorem~\ref{model_sparsification}
        for $l = k / b$.
        Thus, by Theorem~\ref{general_lower_bound} we have
        $$
            m = \Omega\left(k \cdot \log_b(n / k)\right).
        $$
        At the same time, the lower bound $\Omega(k)$ is obvious. Combining them together
        we get the desired result.
    \end{proof}

    \begin{theorem}
        Let $A$ be an $m \times n$ matrix which is $\eps$-$\Tc_k$-RIP-$1$.
        Then, if $\eps$ is sufficiently small  and $k = \omega(\log n)$,
        $$
            m = \Omega\left(k \cdot \frac{\log(n / k)}{\log \log (n / k)}\right). 
        $$
    \end{theorem}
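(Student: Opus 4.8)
The plan is to apply Theorem~\ref{general_lower_bound} to the tree-sparse model $\Tc_k$, so the whole argument reduces to exhibiting a value of $l$ for which $\Tc_k$ satisfies the two hypotheses of Theorem~\ref{model_sparsification} and for which $\log(k/l) = \Theta(\log\log(n/k))$. Matching the target bound $m = \Omega\!\left(k \log(n/k)/\log\log(n/k)\right)$ against the conclusion $m = \Omega\!\left(k\log(n/k)/\log(k/l)\right)$ of Theorem~\ref{general_lower_bound} dictates the choice $l = \Theta\!\left(k / \log(n/k)\right)$, so that $k/l = \Theta(\log(n/k))$ and hence $\log(k/l) = \Theta(\log\log(n/k))$ in the nontrivial regime $n/k \to \infty$ (when $n/k = O(1)$ the claimed bound is just $\Omega(k)$, which holds trivially). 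The hypothesis $k = \omega(\log n)$ guarantees $l \geq 1$ and will be used again below.

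First I would verify that every set in $\Sigma_l$ is $\Tc_k$-sparse, i.e., that the minimal rooted subtree containing any $l$ vertices has size at most $k$. For a set $S$ of $l$ vertices this subtree is $S$ together with all ancestors of its elements, and since each vertex of $S$ has a unique ancestor at every depth $d$, the closure contains at most $\min(2^d, l)$ vertices at depth $d$, giving
$$
  |S \cup \operatorname{anc}(S)| \leq \sum_{d=0}^{h} \min(2^d, l) \leq l\bigl(h - \log_2 l + O(1)\bigr),
$$
where the first $\lceil \log_2 l\rceil$ levels contribute $O(l)$ and each deeper level contributes at most $l$. Substituting $l = c\, k/\log_2(n/k)$ and $h = \log_2 n + O(1)$ turns the right-hand side into $c\,k\,(1 + o(1))$, which is at most $k$ once $c$ is a small enough constant and $n/k$ is large. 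This is the step I expect to require the most care, since it hinges on the cancellation $h - \log_2 l = \log_2(n/k) + O(\log\log(n/k))$ being small relative to $\log_2(n/k)$.

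Next I would produce the required partition. Choose the depth $d^\star$ so that each complete subtree rooted at depth $d^\star$ has size $k/2$ (up to rounding), i.e. $2^{\,h - d^\star + 1} = \Theta(k)$. These $2^{d^\star} = \Theta(n/k)$ subtrees are pairwise disjoint, each has size $\Omega(k)$, and together they cover every vertex of depth at least $d^\star$, which is a $1 - o(1)$ fraction of $[n]$ because the top part contains only $O(n/k)$ vertices and $k \to \infty$. Each such subtree is $\Tc_k$-sparse: the subtree together with the $d^\star = \log_2(n/k) + O(1)$ ancestors of its root forms a rooted subtree of size $k/2 + O(\log(n/k)) \leq k$, where I again invoke $k = \omega(\log n)$ to absorb the ancestor path as a negligible $o(k)$ term.

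With both hypotheses of Theorem~\ref{model_sparsification} established for $l = \Theta(k/\log(n/k))$, Theorem~\ref{general_lower_bound} yields
$$
  m = \Omega\!\left(k\cdot \frac{\log(n/k)}{\log(k/l)}\right) = \Omega\!\left(k\cdot \frac{\log(n/k)}{\log\log(n/k)}\right),
$$
as desired, the final equality using $k/l = \Theta(\log(n/k))$.
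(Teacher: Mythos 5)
Your proposal is correct and follows essentially the same route as the paper: verify the two hypotheses of Theorem~\ref{model_sparsification} for $l = \Theta(k/\log(n/k))$ (your level-by-level ancestor count $\sum_d \min(2^d,l)$ is the same bound as the paper's lemma that any $|S|$ vertices lie in a rooted subtree of size $O(|S|\log(n/|S|))$) and then invoke Theorem~\ref{general_lower_bound}. You actually supply more detail than the paper on the second hypothesis (the partition into depth-$d^\star$ subtrees), which the paper only asserts.
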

    \begin{proof}
        The next Lemma shows that for any $k = \omega(\log n)$ the model $\Tc_k$ satisfies the first condition of Theorem~\ref{model_sparsification}
        with $l = \Omega(k / \log (n / k))$. 
        \begin{lemma}
            Let $S \subseteq [n]$ be a subset of the full binary tree. Then there exists a subtree that contains both $S$ and the root
            with at most $O(|S| \log(n / |S|))$ vertices.
        \end{lemma}
        \begin{proof}
            Let $T$ be a subtree that consists of $\log |S|$ levels of the full binary tree that are closest to the root.
            Let $T'$ be a subtree that is a union of $T$ and paths from the root to all the elements of $|S|$.
            It is not hard to see that $|T' \setminus T| \leq |S| \log (n / |S|)$.
            As a result we get
            $$
                |T'| \leq |T| + |S| \log (n / |S|) \leq O(|S| \log (n / |S|)).
            $$
        \end{proof}
        The second condition of Theorem~\ref{model_sparsification} is satisfied as well (here we use that $k = \omega(\log n)$).
        Thus, applying Theorem~\ref{general_lower_bound} we have
        $$
            m = \Omega\left(k \cdot \frac{\log(n / k)}{\log \log(n / k)}\right).
        $$
    \end{proof}

    \section{Upper bounds for model-based RIP-$1$ matrices}
    \label{upper_bounds}

    In this section we complement the lower bounds by upper bounds.

    We use the following obvious modification of a theorem from~\cite{BGIKS08}.
    \begin{theorem}[\cite{BGIKS08}]
        \label{expanders_rip}
        If a graph $G = (U, V, E)$ is an $\eps$-$\Mc_k$-expander for some
        model $\Mc_k \subseteq \Sigma_k$, then the normalized (by a factor of
        $d$, where $d$ is the degree of all vertices from $U$)
        adjacency matrix of $G$ (which size is $|V| \times |U|$)
        is an $O(\eps)$-$\Mc_k$-RIP-$1$ matrix.
    \end{theorem}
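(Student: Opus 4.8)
The plan is to prove the two inequalities of the $\Mc_k$-RIP-$1$ definition separately; the upper bound is immediate, while the lower bound is where all the work lies and where the expansion hypothesis enters. Let $A$ be the adjacency matrix of $G$ normalized by $d$, so that $a_{ij} = 1/d$ when $(u_j, v_i) \in E$ and $a_{ij} = 0$ otherwise. Since every left vertex has degree exactly $d$, each column of $A$ has exactly $d$ nonzero entries, each equal to $1/d$, hence $\ell_1$-norm exactly $1$. The upper bound then follows from the triangle inequality alone: writing $A_j$ for the $j$-th column, $\|Ax\|_1 = \|\sum_j x_j A_j\|_1 \le \sum_j |x_j|\,\|A_j\|_1 = \|x\|_1 \le (1+\eps)\|x\|_1$, with no appeal to expansion.

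For the lower bound, I would fix an $\Mc_k$-sparse vector $x$ and let $S = \supp(x)$, so $S \subseteq T$ for some $T \in \Mc_k$. Relabel the coordinates of $S$ as $1, 2, \dots, s$ so that $|x_1| \ge |x_2| \ge \dots \ge |x_s|$, and process the corresponding left vertices $u_1, \dots, u_s$ in this order. Call an edge $(u_j, v)$ a \emph{collision} if some earlier vertex $u_{j'}$ with $j' < j$ is also joined to $v$; writing $c_j$ for the number of collision edges incident to $u_j$ and $S_t = \set{u_1, \dots, u_t}$, one has $\sum_{j \le t} c_j = d\,|S_t| - |N(S_t)|$. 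The crucial point is that each prefix $S_t$ is a subset of $T \in \Mc_k$, hence $\Mc_k$-sparse, so the $\Mc_k$-expansion property applies to it and gives $|N(S_t)| \ge (1-\eps)\,d\,t$; therefore $\sum_{j \le t} c_j \le \eps\, d\, t$ for every $t$.

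Next I would produce the per-vertex estimate. For each $v \in N(S)$ let $\min(v)$ be the smallest index joined to $v$; bounding $|\sum_{j \sim v} x_j| \ge |x_{\min(v)}| - \sum_{j \sim v,\, j \ne \min(v)} |x_j|$ and summing over $v$, a bookkeeping argument identifies the ``dominant'' terms as $\sum_j (d - c_j)|x_j|$ and the error terms as exactly $\sum_j c_j |x_j|$, since an edge fails to be dominant at its endpoint precisely when it is a collision. This yields $\|Ax\|_1 \ge \|x\|_1 - \frac{2}{d}\sum_j c_j |x_j|$. The final step bounds the error: since the weights $|x_j|$ are nonincreasing and the partial sums satisfy $\sum_{j \le t} c_j \le \eps d t$, summation by parts gives $\sum_j c_j |x_j| \le \eps d \|x\|_1$, whence $\|Ax\|_1 \ge (1-2\eps)\|x\|_1$. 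Combined with the upper bound this is exactly the $O(\eps)$-$\Mc_k$-RIP-$1$ conclusion.

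The main obstacle is the lower bound's error-counting step: obtaining the clean identity that the total error equals $\frac{2}{d}\sum_j c_j|x_j|$ requires ordering the support by magnitude and carefully matching collision edges to non-dominant contributions, after which the Abel-summation bound is what converts the family of prefix expansion bounds into a single factor $\eps$. I expect the routine verifications (the column norms and the triangle-inequality upper bound) to be trivial, and I emphasize that the only place the model structure is used is the observation that every magnitude-ordered prefix of $\supp(x)$ remains $\Mc_k$-sparse, so that $\Mc_k$-expansion — rather than full expansion — already suffices.
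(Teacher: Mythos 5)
Your proof is correct and is essentially the standard argument from~\cite{BGIKS08} that the paper itself invokes without proof (the paper merely calls the model-based version an ``obvious modification'' of that reference). Your reconstruction --- ordering the support by magnitude, counting collision edges via expansion of each prefix, and Abel summation --- is exactly that argument, and you correctly isolate the one new ingredient, namely that every prefix of an $\Mc_k$-sparse support is still $\Mc_k$-sparse, so $\Mc_k$-expansion suffices.
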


    Thus, it is sufficient to build $\Mc_k$-expanders
    with as small $m$ as possible.
    We use the standard probabilistic argument to show the existence of such
    graphs. Namely, for every vertex $u \in U$ we sample 
    a subset of $[m]$ of size $d$ ($d$ and $m$ have to be carefully
    chosen).
    Then, we connect $u$ and all the vertices from this subset.
    All sets we sample are uniform (among all $d$-subsets of $[m]$)
    and independent.

    We use the following tail inequality, which can be proved using
    Chernoff bound (and whose slight
    variant is proved and used in~\cite{BMRV02}).

    \begin{lemma}[\cite{BMRV02}]
        \label{concentration}
        There exist constants $C > 1$ and $\delta > 0$ such that, whenever
        $m \geq C dt / \eps$, one has for any $T \subseteq U$ with $|T| = t$
        $$
            \Prb{}{|\set{v \in V \mid \exists u \in T \; (u, v) \in E}|
            < (1 - \eps) dt} \leq \left(\delta \cdot
                    \frac{\eps m}{dt}\right)^{-\eps dt}
        $$
    \end{lemma}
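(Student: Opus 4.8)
The plan is to recast the event as a statement about \emph{collisions} among the $dt$ edges emanating from $T$, and to union-bound over the (few) collision positions rather than over the (many) candidate neighborhoods. Write $T = \set{u_1,\dots,u_t}$ and recall that each $u_i$ independently contributes a uniformly random $d$-subset $N(u_i) \subseteq [m]$. The neighborhood is $N(T) = \bigcup_i N(u_i)$, and if we expose the $dt$ incident edges one at a time, then $|N(T)|$ equals $dt$ minus the number $C$ of edges whose endpoint has already appeared among previously exposed edges. Thus $|N(T)| < (1-\eps)dt$ is precisely the event $C > \eps dt$, and it suffices to bound $\Prb{}{C \ge \ell}$ for $\ell \approx \eps dt$.

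I would expose the edges vertex by vertex, and within each $u_i$ as an ordered sequence of $d$ draws without replacement from $[m]$. Conditioned on everything revealed before a given edge $e$, its endpoint is uniform over at least $m - d$ values (those not yet used inside $e$'s own vertex), while the set of vertices already hit by \emph{earlier} vertices has size at most $dt$. Hence the conditional probability that $e$ is a collision is at most $\frac{dt}{m-d}$, uniformly over the past. Consequently, for any fixed set $J$ of $\ell$ edge positions, exposing the edges in order and multiplying the per-step conditional bounds gives $\Prb{}{\text{all edges in } J \text{ collide}} \le \left(\frac{dt}{m-d}\right)^{\ell}$. Since $\set{C \ge \ell}$ is contained in the union over all $\binom{dt}{\ell}$ choices of $J$ of these events, a union bound yields $\Prb{}{C \ge \ell} \le \binom{dt}{\ell}\left(\frac{dt}{m-d}\right)^{\ell}$.

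It then remains to estimate. Using $\binom{dt}{\ell} \le (e\,dt/\ell)^{\ell} = (e/\eps)^{\ell}$ and, for $m \ge C dt/\eps$ with $C$ a large constant, $m - d \ge m/2$, the bound becomes at most $\left(\frac{2e}{\eps} \cdot \frac{dt}{m}\right)^{\ell} = \left(\frac{2e\, dt}{\eps m}\right)^{\eps dt}$ (up to harmless rounding of $\ell$). Taking $\delta = 1/(2e)$ rewrites this as $\left(\delta \cdot \frac{\eps m}{dt}\right)^{-\eps dt}$, and choosing $C > 2e$ forces the base below $1$, so the bound is genuinely decreasing and $m \ge C dt/\eps$ is exactly the right regime. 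I expect the main obstacle to be the \emph{choice of what to union-bound over}: the naive approach of covering $N(T)$ by a fixed set of size $(1-\eps)dt$ and summing over the $\binom{m}{(1-\eps)dt}$ such sets loses a factor of $e^{(1-\eps)dt}$, which would force $\delta$ to shrink like $1/(\eps e^{1/\eps})$ instead of remaining an absolute constant; union-bounding over the at most $\eps dt$ collision positions is what keeps $\delta$ constant. The one point requiring care is justifying the clean per-edge estimate $dt/(m-d)$ under without-replacement sampling inside each vertex, which is handled by observing that within-vertex repeats are impossible by construction, so collisions can only occur against the at most $dt$ endpoints contributed by earlier vertices.
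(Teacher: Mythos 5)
Your proof is correct, and it shares the paper's central reduction: both arguments expose the $dt$ edges leaving $T$ one at a time and observe that $|N(T)| < (1-\eps)dt$ is precisely the event that more than $\eps dt$ of these edges collide with previously exposed ones. Where you diverge is in how the upper tail of the collision count is bounded. The paper invokes the domination argument of \cite{BMRV02} (the collision count is stochastically dominated by a binomial $B(dt, dt/m)$) and then applies the multiplicative Chernoff bound of \cite{MR95} with $\mu = (dt)^2/m$ and $1+\tau = \eps m/(dt)$, which gives a bound of the form $\left(e\,dt/(\eps m)\right)^{\eps dt}$. You instead union-bound over the $\binom{dt}{\ell}$ sets of $\ell \approx \eps dt$ collision positions and multiply the per-step conditional collision probabilities $dt/(m-d)$, arriving at $\binom{dt}{\ell}\left(dt/(m-d)\right)^{\ell} \le \left(2e\,dt/(\eps m)\right)^{\eps dt}$. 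These are essentially the same computation --- the union bound over position sets is one standard way to prove the binomial upper tail --- but your version is fully self-contained: the uniform-over-the-past conditional bound $dt/(m-d)$ replaces both the domination step and the black-box Chernoff inequality, at the cost of a marginally worse constant ($\delta = 1/(2e)$ rather than roughly $1/e$), which is irrelevant to the lemma as stated. Your treatment of the two points that actually require care --- the harmless rounding of $\ell$ to an integer (the base is below $1$ once $C > 2e$, so a larger exponent only helps) and the without-replacement sampling inside each vertex (within-vertex repeats are impossible, so only the at most $dt$ endpoints of earlier vertices are collision targets, out of at least $m-d$ equally likely values) --- is correct.
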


    For the proof see Appendix (Section~\ref{concentration_proof}).

    For a model $\Mc_k \subseteq \Sigma_k$ denote $\#(\Mc_k, t)$ the number
    of $\Mc_k$-sparse sets of size $t$ (for $t \in [k]$).
    We use the following simple estimate.

    \begin{lemma}
        $$
            \forall t \in [k] \quad \#(\Mc_k, t) \leq
            \min\set{|\Mc_k| \cdot \binom{k}{t}, \binom{n}{t}}.
        $$
    \end{lemma}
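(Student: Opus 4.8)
The plan is to bound $\#(\Mc_k, t)$ against each of the two quantities inside the minimum separately, since an upper bound by the minimum of two numbers is equivalent to an upper bound by each of them individually.

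First I would dispatch the bound $\#(\Mc_k, t) \le \binom{n}{t}$. This is immediate from the definitions: every $\Mc_k$-sparse set of size $t$ is, in particular, a $t$-element subset of the ground set $[n]$, and there are exactly $\binom{n}{t}$ such subsets. Hence the number of $\Mc_k$-sparse sets of size $t$ cannot exceed $\binom{n}{t}$.

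Next I would handle the bound $\#(\Mc_k, t) \le |\Mc_k| \cdot \binom{k}{t}$ via a counting (union) argument over the members of the model. By definition, a set $S$ with $|S| = t$ is $\Mc_k$-sparse precisely when there exists some $T \in \Mc_k$ with $S \subseteq T$. For a fixed $T \in \Mc_k$, which has size exactly $k$, the number of its subsets of size $t$ is $\binom{k}{t}$. Therefore every $\Mc_k$-sparse set of size $t$ appears among the $t$-subsets of at least one $T \in \Mc_k$, and summing $\binom{k}{t}$ over all $|\Mc_k|$ members of the model yields the upper bound $|\Mc_k| \cdot \binom{k}{t}$ on $\#(\Mc_k, t)$.

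Combining the two bounds gives $\#(\Mc_k, t) \le \min\set{|\Mc_k| \cdot \binom{k}{t}, \binom{n}{t}}$, as required. There is no genuine obstacle here; the only point worth flagging is that the union argument overcounts any $t$-set $S$ that happens to lie inside several distinct members of $\Mc_k$. Since we are only after an upper bound, this overcounting is harmless: it can only inflate the right-hand side, so the stated inequality still holds.
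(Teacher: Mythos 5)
Your proof is correct and matches the intended argument: the paper states this as a ``simple estimate'' without proof, and your two separate bounds (every $\Mc_k$-sparse $t$-set is a $t$-subset of $[n]$, and a union bound over the members $T \in \Mc_k$, each contributing $\binom{k}{t}$ subsets of size $t$) are exactly the obvious justification the authors had in mind. Your remark that the overcounting in the union bound is harmless for an upper bound is accurate.
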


    Now if we combine this Lemma with the standard estimate $\binom{u}{v} \leq
    (eu / v)^v$, we get the following bound. 

    \begin{equation}
        \label{block_estimates}
            \forall t \in [k] \quad \#(\Mc_k, t) \leq \min \set{|\Mc_k| \cdot \left(\frac{ek}{t}\right)^t,
                \left(\frac{en}{t}\right)^t} 
    \end{equation}

    Now we combine these estimates, Lemma~\ref{concentration}, Theorem~\ref{expanders_rip} and the union bound
    to get upper bounds for $m$ for $\Mc_k$-RIP-$1$ matrices.

    \begin{theorem}
        \label{model_upper}
        For any $0 < \eps < 1/2$ and any model $\Mc_k$
        that is of size at least $n / k$
        there exists an $\eps$-$\Mc_k$-RIP-$1$ matrix with
        $$
        m = O\left(\frac{k}{\eps^2} \cdot \frac{\log(n / l)}{\log(k / l)}\right),
        $$
        where
        $$
            l = \frac{\log |\Mc_k|}{\log(n / k)}.
        $$
    \end{theorem}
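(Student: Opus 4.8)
My plan is to reduce the problem to constructing a good expander and then build one by the probabilistic method. By Theorem~\ref{expanders_rip} it suffices to exhibit, for a suitable $\eps' = \Theta(\eps)$, an $\eps'$-$\Mc_k$-expander $G = (U, V, E)$ with $|U| = n$, $|V| = m$, and left-degree $d$, whose normalized adjacency matrix is then $O(\eps')$-$\Mc_k$-RIP-$1$. The plan is to sample $G$ at random: independently for each $u \in U$, connect $u$ to a uniformly chosen $d$-subset of $V$. Guided by the shape of Lemma~\ref{concentration}, I would set
$$
    d = \Theta\left(\frac{1}{\eps'} \cdot \frac{\log(n / l)}{\log(k / l)}\right), \qquad m = \Theta\left(\frac{dk}{\eps'}\right),
$$
so that $m$ has exactly the claimed order $\Theta\big(k\eps^{-2}\log(n/l)/\log(k/l)\big)$, the hypothesis $m \geq C d t / \eps'$ holds for all $t \leq k$, and $\delta \eps' m / (dt) = \Theta(k/t)$. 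With these choices Lemma~\ref{concentration} bounds the probability that a fixed $t$-set fails to expand by $(\Theta(k/t))^{-\eps' d t}$.

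The graph fails to be an $\eps'$-$\Mc_k$-expander precisely when some $\Mc_k$-sparse set $S$, necessarily of size $t \leq k$, has $|N(S)| < (1 - \eps') d |S|$. Since there are $\#(\Mc_k, t)$ such sets of size $t$, a union bound gives total failure probability at most
$$
    \sum_{t = 1}^{k} \#(\Mc_k, t) \left(\frac{\delta \eps' m}{d t}\right)^{-\eps' d t}.
$$
I would prove this is strictly below $1$ by showing each summand is $e^{-\Omega(t)}$, which is exactly where estimate~\eqref{block_estimates} is used. The two available bounds $\#(\Mc_k, t) \leq (en/t)^t$ and $\#(\Mc_k, t) \leq |\Mc_k| (ek/t)^t$ cross precisely at $t = l$ (their logarithms differ by $t \log(n/k) - \log|\Mc_k| = (t - l)\log(n/k)$), so I would split the sum there, using the first bound for $t \leq l$ and the second for $t > l$.

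For $t \leq l$ the per-term logarithm is at most $t\big[\log(en/t) - \eps' d \log\Theta(k/t)\big]$, and after substituting $\eps' d = \Theta(\log(n/l)/\log(k/l))$ the needed negativity reduces to the elementary inequality $\frac{\log(n/l)}{\log(k/l)}\log(k/t) \geq \log(n/t)$ for $t \leq l$; this is equivalent to $(\log\frac{n}{l})(\log\frac{k}{t}) \geq (\log\frac{n}{t})(\log\frac{k}{l})$, i.e. to $(\log\frac{l}{t})(\log\frac{n}{k}) \geq 0$, which holds since $t \leq l \leq k \leq n$. For $t > l$ the extra additive term $\log|\Mc_k| = l \log(n/k)$ must be absorbed: here $\eps' d t \log\Theta(k/t)$ dominates $t\log(ek/t)$ because $\eps' d$ is a large constant times a quantity that is at least $1$, and at its smallest value $t = k$ it still beats the additive term because $k \eps' d = \Omega\big(k\log(n/l)/\log(k/l)\big) \geq l\log(n/k)$ (writing $r = k/l$ and using $r/\log r \geq 1$). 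Choosing the hidden constant in $d$ large enough upgrades each bound to a genuine $-\Omega(t)$, so the geometric-type series sums to less than $1$ and a good expander exists.

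The main obstacle is this two-regime union-bound calculation: a single degree $d$ must simultaneously control the small sets, where $|\Mc_k|$ offers no savings and the count $(en/t)^t$ governs, and the large sets, where the additive $\log|\Mc_k|$ appears. The crossover at $t = l$ together with the tight elementary inequality above is exactly what makes the stated $d$, and hence the stated $m$, sufficient; in view of the paper's matching lower bounds, it is also essentially optimal.
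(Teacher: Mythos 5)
Your proposal matches the paper's proof in all essentials: the same random-graph construction, the same use of Lemma~\ref{concentration} with a union bound over $\#(\Mc_k,t)$, the same crossover of the two counting estimates at $t=l$, and the same final choice $d=\Theta\bigl(\eps^{-1}\log(n/l)/\log(k/l)\bigr)$, $m=\Theta(dk/\eps)$. The only (immaterial) difference is that you verify each summand is $e^{-\Omega(t)}$ directly via the elementary cross-ratio inequality and $r\geq\log r$, whereas the paper invokes log-convexity of the per-size bound to reduce the verification to the endpoints $t=1,l,k$.
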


The proof is in the appendix. We note that the condition $|\Mc_k| \geq n / k$ is needed to make sure that
$l \geq 1$.

Now we derive corollaries for the block- and tree-sparse cases.

\begin{corollary}
    \label{block_upper}
    For any $0 < \eps < 1/2$ and any $1 \leq b \leq k \leq n$
    there exists an $\eps$-$\Bc_{k,b}$-RIP-$1$ matrix with
    $$
        m = O\left(\frac{k \cdot (1 + \log_b(n/k))}{\eps^2}\right)
    $$
    rows.
\end{corollary}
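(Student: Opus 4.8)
The plan is to simply instantiate Theorem~\ref{model_upper} with $\Mc_k = \Bc_{k,b}$ and then simplify the resulting expression. The first step is to compute the model size. Since $\Bc_{k,b}$ consists of all unions of $k/b$ out of the $n/b$ blocks, we have $|\Bc_{k,b}| = \binom{n/b}{k/b}$. Using the standard two-sided estimate $(N/K)^K \leq \binom{N}{K} \leq (eN/K)^K$ with $N = n/b$ and $K = k/b$, I would bound
\[
\frac{k}{b}\log\frac{n}{k} \;\leq\; \log|\Bc_{k,b}| \;\leq\; \frac{k}{b}\left(1 + \log\frac{n}{k}\right),
\]
so that the parameter $l = \log|\Bc_{k,b}|/\log(n/k)$ appearing in Theorem~\ref{model_upper} satisfies $l = \Theta(k/b)$; more precisely $\tfrac{k}{b} \leq l \leq \tfrac{2k}{b}$ in the non-degenerate regime $n/k \geq 2$. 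Before applying the theorem I must also verify its hypothesis $|\Bc_{k,b}| \geq n/k$ (equivalently $l \geq 1$), which is immediate from $b \leq k$: since $1 \leq k/b \leq (n/b)/2$, we have $\binom{n/b}{k/b} \geq \binom{n/b}{1} = n/b \geq n/k$.

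With $l = \Theta(k/b)$ in hand, the substitution into Theorem~\ref{model_upper} bounds the numerator by $\log(n/l) \leq \log(nb/k)$ (using $l \geq k/b$) and the denominator from below by $\log(k/l) \geq \log(b/2)$ (using $l \leq 2k/b$), giving
\[
m = O\!\left(\frac{k}{\eps^2}\cdot\frac{\log(n/l)}{\log(k/l)}\right) = O\!\left(\frac{k}{\eps^2}\cdot\frac{\log(nb/k)}{\log b}\right).
\]
The final simplification is the algebraic identity
\[
\frac{\log(nb/k)}{\log b} = \frac{\log b + \log(n/k)}{\log b} = 1 + \log_b\frac{n}{k},
\]
which reproduces exactly the claimed bound $m = O\bigl(k(1 + \log_b(n/k))/\eps^2\bigr)$.

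The main obstacle is the behaviour of the denominator $\log(k/l)$ for small $b$. Because $l$ is only $\Theta(k/b)$ rather than exactly $k/b$, the best lower bound I have is $\log(k/l) \geq \log(b/2)$, which is harmless for $b \geq 4$ (there $\log(b/2) \geq \tfrac12\log b$, so the ratio stays within a constant factor of $\log b$ and the computation above goes through), but degenerates for $b \in \{2,3\}$ and is vacuous at $b = 1$. I would dispose of these small values separately: for $b$ below a fixed constant we have $\log b = \Theta(1)$, so the target bound is merely $O(k\log(n/k)/\eps^2)$, and this is delivered by an ordinary $(k,\eps)$-expander via Theorem~\ref{expanders_rip} applied to the full model $\Sigma_k \supseteq \Bc_{k,b}$; since $1 + \log_b(n/k) = \Theta(\log(n/k))$ for such $b$, the two bounds agree. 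Folding the small-$b$ case and the $b \geq 4$ case together yields the corollary.
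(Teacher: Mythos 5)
Your proof is correct and follows essentially the same route as the paper: compute $|\Bc_{k,b}| = \binom{n/b}{k/b}$, verify it is at least $n/k$, and instantiate Theorem~\ref{model_upper} with $l = \Theta(k/b)$ before simplifying $\log(n/l)/\log(k/l)$ to $1 + \log_b(n/k)$. You are in fact more careful than the paper's two-line argument, which simply declares $l = k/b$; your explicit handling of the constant-factor slack in $l$ and of the degenerate small-$b$ regime (via an ordinary $(k,\eps)$-expander for the full model $\Sigma_k$) patches a genuine, if minor, gap in the published proof.
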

\begin{proof}
    It is easy to see that
    $$
        |\Mc_k| = \binom{n/b}{k/b} \geq n/k.
    $$
    Thus, we can apply Theorem~\ref{model_upper} with
    $
        l = k/b
    $
    and get the required bound.
\end{proof}

\begin{corollary}
    \label{tree_upper}
    For any $0 < \eps < 1/2$ and any $k = \omega(\log n)$
    there exists an $\eps$-$\Tc_k$-RIP-$1$ matrix with
    $$
        m = O\left(\frac{k}{\eps^2}\cdot\frac{\log(n / k)}{\log \log(n/k)}\right)
    $$
    rows.
\end{corollary}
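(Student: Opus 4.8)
The plan is to apply Theorem~\ref{model_upper} to the tree-sparse model $\Tc_k$, in exactly the way Corollary~\ref{block_upper} does for the block-sparse model; the only real work is to pin down the cardinality $|\Tc_k|$, since the entire bound of Theorem~\ref{model_upper} is driven through the quantity $l = \log|\Tc_k| / \log(n/k)$. Concretely, I would show that $\log|\Tc_k| = \Theta(k)$, which yields $l = \Theta\!\left(k / \log(n/k)\right)$, and then substitute and simplify.

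First I would bound $|\Tc_k|$ from above. Every root-containing subtree of size $k$ is, as an abstract object, a binary-tree shape on $k$ nodes (each node occupying the left-or-right child slot of its parent), and this correspondence is injective; hence $|\Tc_k|$ is at most the $k$-th Catalan number $C_k = \tfrac{1}{k+1}\binom{2k}{k} \le 4^k$, so $\log|\Tc_k| = O(k)$ irrespective of the depth $h \approx \log_2 n$ of the ambient tree. For the matching lower bound I would exhibit $2^{\Omega(k)}$ distinct subtrees explicitly. Fix $f = \Theta(k)$ (about $k/3$) and include the entire top part of the tree down to and including the level that contains $f$ nodes; this consumes $2f-1$ nodes and leaves a frontier of $f$ nodes, each offering two child slots on the next level (which exists because $f \lesssim 2^{h-1}$ whenever $k$ is at most a constant fraction of $n$). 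Choosing any $f$ of these $2f$ child slots produces a distinct subtree of size $(2f-1)+f = 3f-1 = k$, so $|\Tc_k| \ge \binom{2f}{f} = 2^{\Omega(k)}$. Combining the two estimates gives $\log|\Tc_k| = \Theta(k)$; since $k = \omega(\log n)$ this also forces $\log|\Tc_k| = \omega(\log n) > \log(n/k)$, so both $|\Tc_k| \ge n/k$ and $l \ge 1$ hold and Theorem~\ref{model_upper} is applicable.

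It then remains to substitute $l = \Theta(k/\log(n/k))$ into $m = O\!\left(\tfrac{k}{\eps^2}\cdot \tfrac{\log(n/l)}{\log(k/l)}\right)$. We get $k/l = \Theta(\log(n/k))$, so $\log(k/l) = \Theta(\log\log(n/k))$; and $n/l = \Theta\!\left(n\log(n/k)/k\right)$, whose logarithm is dominated by the $\log(n/k)$ term, so $\log(n/l) = \Theta(\log(n/k))$. This yields the claimed $m = O\!\left(\tfrac{k}{\eps^2}\cdot \tfrac{\log(n/k)}{\log\log(n/k)}\right)$.

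The main obstacle is the lower bound on $|\Tc_k|$: because $k = \omega(\log n)$ forces $k$ to exceed the depth $h$, the ambient tree is too shallow to realize all $C_k$ abstract shapes, so one cannot merely invoke the Catalan count and must instead construct exponentially many subtrees that actually fit within depth $h$ — which is precisely what the frontier argument above accomplishes. One must also take some care at the boundary $k = \Theta(n)$, where $\log(n/k)$ and $\log\log(n/k)$ degenerate; there the construction and the final simplification break down, but the bound collapses to the trivial $m = O(k/\eps^2)$ guaranteed in any case, so this regime can be handled separately.
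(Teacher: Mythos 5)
Your proposal is correct and follows essentially the same route as the paper: bound $|\Tc_k| \leq 4^k$ via Catalan numbers and feed $l = \log|\Tc_k|/\log(n/k) = O(k/\log(n/k))$ into Theorem~\ref{model_upper}. The extra work you do — the frontier construction giving $|\Tc_k| = 2^{\Omega(k)}$ so that $|\Tc_k| \geq n/k$ and $l \geq 1$ — is exactly the applicability check the paper compresses into the phrase ``since $k = \omega(\log n)$,'' and your explicit substitution/simplification matches what the paper leaves to the reader (note that for the final bound only the \emph{upper} estimate on $l$ is actually needed, since $\log(n/l)/\log(k/l)$ is increasing in $l$).
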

\begin{proof}
    Using a simple estimate on Catalan numbers we can see
    that $|\Tc_k| \leq 4^k$. Since $k = \omega(\log n)$,
    we are in position to apply Theorem~\ref{model_upper} and
    get the required bound.
\end{proof}
 
    \section{Acknowledgments}
        This work was supported in part by NSF CCF-1065125 award, by MADALGO (a Center
        of the Danish National Research Foundation), by Packard Foundation and by Akamai Presidential Fellowship. 

    The second author would like to thank Jelani Nelson for useful discussions.
    We also thank Ludwig Schmidt who pointed out an error in an
    earlier version of this paper.

    \bibliographystyle{alpha}
    \bibliography{../../../bibtex/ir}

\newcommand{\etalchar}[1]{$^{#1}$}
\begin{thebibliography}{BMRV02}

\bibitem[B1111]{B11}
Open problems in data streams, property testing, {AND} related topics.
\newblock \texttt{http://people.cs.umass.edu/\textasciitilde
  mcgregor/papers/11-openproblems.pdf}, 2011.

\bibitem[BCDH10]{BCDH10}
Richard~G. Baraniuk, Volkan Cevher, Marco~F. Duarte, and Chinmay Hegde.
\newblock {Model-based compressive sensing}.
\newblock {\em IEEE Transactions on Information Theory}, 56(4):1982--2001,
  2010.

\bibitem[BGI{\etalchar{+}}08]{BGIKS08}
Radu Berinde, Anna~C. Gilbert, Piotr Indyk, Howard~J. Karloff, and Martin~J.
  Strauss.
\newblock {Combining geometry and combinatorics: A unified approach to sparse
  signal recovery}.
\newblock {\em Allerton}, 2008.

\bibitem[BIPW10]{DIPW10}
Khanh~Do Ba, Piotr Indyk, Eric Price, and David~P. Woodruff.
\newblock {Lower Bounds for Sparse Recovery}.
\newblock In {\em SODA}, pages 1190--1197, 2010.

\bibitem[BMRV02]{BMRV02}
Harry Buhrman, Peter~Bro Miltersen, Jaikumar Radhakrishnan, and Srinivasan
  Venkatesh.
\newblock {Are Bitvectors Optimal?}
\newblock {\em SIAM J. Comput.}, 31(6):1723--1744, 2002.

\bibitem[CRT06]{CRT06}
Emmanuel~J. Cand\`{e}s, Justin~K. Romberg, and Terence Tao.
\newblock {Stable signal recovery from incomplete and inaccurate measurements}.
\newblock {\em Comm. Pure Appl. Math.}, 59(8):1207--1223, August 2006.

\bibitem[Don06]{D06}
David~L. Donoho.
\newblock {Compressed sensing}.
\newblock {\em IEEE Transactions on Information Theory}, 52(4):1289--1306,
  2006.

\bibitem[EM09]{EM09}
Y.~Eldar and M.~Mishali.
\newblock Robust recovery of signals from a structured union of subspaces.
\newblock {\em IEEE Trans. Inform. Theory}, 55(11):5302--5316, 2009.

\bibitem[FPRU10]{FPRU10}
Simon Foucart, Alain Pajor, Holger Rauhut, and Tino Ullrich.
\newblock {The Gelfand widths of lp-balls for $0 p \leq 1$}.
\newblock {\em J. Complex.}, 26(6):629--640, December 2010.

\bibitem[GG84]{GG84}
A.Yu. Garnaev and E.D. Gluskin.
\newblock {On widths of the {Euclid}ean ball.}
\newblock {\em Sov. Math., Dokl.}, 30:200--204, 1984.

\bibitem[GI10]{GI10}
A.~Gilbert and P.~Indyk.
\newblock Sparse recovery using sparse matrices.
\newblock {\em Proceedings of IEEE}, 2010.

\bibitem[Glu84]{G84}
E.D. Gluskin.
\newblock {Norms of random matrices and widths of finite-dimensional sets.}
\newblock {\em Math. USSR, Sb.}, 48:173--182, 1984.

\bibitem[IP11]{IP11}
Piotr Indyk and Eric Price.
\newblock {K-median clustering, model-based compressive sensing, and sparse
  recovery for earth mover distance}.
\newblock In {\em STOC}, pages 627--636, 2011.

\bibitem[Kas77]{K77}
B.S. Kasin.
\newblock {Diameters of some finite-dimensional sets and classes of smooth
  functions.}
\newblock {\em Math. USSR, Izv.}, 11:317--333, 1977.

\bibitem[MR95]{MR95}
Rajeev Motwani and Prabhakar Raghavan.
\newblock {\em {Randomized Algorithms}}.
\newblock Cambridge University Press, 1995.

\bibitem[Mut05]{M05}
S.~Muthukrishnan.
\newblock {Data streams: algorithms and applications}.
\newblock {\em Found. Trends Theor. Comput. Sci.}, 1(2):117--236, August 2005.

\bibitem[Nac10]{N10}
Mergen Nachin.
\newblock {Lower Bounds on the Column Sparsity of Sparse Recovery Matrices}.
\newblock MIT Undergraduate Thesis, 2010.

\bibitem[RCB01]{RCB01}
J.~Romberg, H.~Choi, and R.~Baraniuk.
\newblock {Bayes}ian tree-structured image modeling using wavelet-domain
  {H}idden {M}arkov {M}odels.
\newblock 10(7):1056--1068, 2001.

\end{thebibliography}

\appendix

    \section{RIP-$1$ yields sparse recovery}
    \label{s:imp}

    In this section we show improved upper bounds on the number of measurements needed to recover good block- or tree-sparse approximations with
    $\ell_1 / \ell_1$ guarantee and constant approximation factor. This result is folklore, but we include it for completeness.

    Suppose that $\Mc_k \subseteq \Sigma_k$ is some model. We say that an $m \times n$ matrix $A$ is $\eps$-$\Mc_k^{(2)}$-RIP-$1$, if for every
    $x \in \Rbb^n$ such that $\supp x \subseteq S_1 \cup S_2$ for some $\Mc_k$-sparse sets $S_1$ and $S_2$ one has
    $$
        (1 - \eps) \|x\|_1 \leq \|Ax\|_1 \leq (1 + \eps) \|x\|_1.
    $$

    Let $A$ be \emph{any} $\eps$-$\Mc_k^{(2)}$-RIP-$1$ matrix for a sufficiently small $\eps$ such that $\|A\|_1 \leq 1 + \eps$.
    Algorithm~\ref{recovery_algorithm} (whose running time is exponential in $n$) given $y = Ax$ for some $x \in \Rbb^n$ recovers 
    a vector $x^* \in \Rbb^n$ such that
    \begin{equation}
        \label{good_approximation}
        \|x - x^*\|_1 \leq (3 + O(\eps)) \cdot \min_{\mbox{$x'$ is $\Mc_k$-sparse}} \|x - x'\|_1.
    \end{equation}
    \begin{algorithm}
        \caption{Model-based sparse recovery}
        \label{recovery_algorithm}
        \begin{algorithmic}
            \Require $y = Ax$ for some $x \in \Rbb^n$
            \Ensure a good $\Mc_k$-approximation $x^*$ of $x$
            \State $x^* \gets 0$
            \For{$S \subseteq [n]$ is an $\Mc_k$-sparse set}
                \State $\tilde{x} \gets \argmin_{\supp x' \subseteq S} \|y - Ax'\|_1$
                \If{$\|y - A\tilde{x}\|_1 \leq \|y - Ax^*\|_1$}
                    \State $x^* \gets \tilde{x}$
                \EndIf
            \EndFor
        \end{algorithmic}
    \end{algorithm} 
    Note that the optimization problem within the for-loop can be easily reduced to a linear program.
    Now let us prove that the resulting vector $x^*$ satisfies~(\ref{good_approximation}).
    Denote $x_{\Mc_k}$ an $\Mc_k$-sparse vector that minimizes $\|x - x_{\Mc_k}\|_1$.
    Now we have
    \begin{multline*}
        \|x - x^*\|_1 \leq \|x - x_{\Mc_k}\|_1 + \|x_{\Mc_k} - x^*\|_1 \leq
        \|x - x_{\Mc_k}\|_1 + (1 + O(\eps)) \|A(x_{\Mc_k} - x^*)\|_1 \leq \\
        \leq \|x - x_{\Mc_k}\|_1 + (1 + O(\eps)) (\|A(x - x_{\Mc_k})\|_1 + \|A(x - x^*)\|_1) \leq \\
        \leq (2 + O(\eps)) \|x - x_{\Mc_k}\|_1 + (1 + O(\eps)) \|A(x - x^*)\|_1 \leq \\
        \leq (2 + O(\eps)) \|x - x_{\Mc_k}\|_1 + (1 + O(\eps)) \|A(x - x_{\Mc_k})\|_1 
        \leq (3 + O(\eps)) \|x - x_{\Mc_k}\|_1.
    \end{multline*}
    The second inequality is true, since $A$ is $\eps$-$\Mc_k^{(2)}$-RIP-$1$ and both $x_{\Mc_k}$ and $x^*$ are $\Mc_k$-sparse.
    The fourth inequality is true, because  $\|A\|_1 \leq 1 + \eps$.
    The fith inequality is due to the construction of the algorithm: clearly, $\|A(x - x^*)\|_1 \leq \|A(x - x_{\Mc_k})\|_1$.

    It is immediate to see that any $\eps$-$\Bc_{2k, b}$-RIP-$1$ matrix is $\eps$-$\Bc_{k, b}^{(2)}$-RIP-$1$.
    Similarly, any $\eps$-$\Tc_{2k}$-RIP-$1$ matrix is $\eps$-$\Tc_k^{(2)}$-RIP-$1$.
    Moreover, since all the singletons are both block- and tree-sparse, we have that these matrices have $\ell_1$-norm at most $1 + \eps$. 
    Thus, plugging Corollaries~\ref{block_upper} and~\ref{tree_upper}
    we get the following result.

    \begin{theorem}
        The problem of model-based stable sparse recovery with $\ell_1 / \ell_1$ guarantee and a constant approximation factor
        can be solved
        \begin{itemize}
            \item with
            $$
                m = O(k \cdot (1 + \log_b (n / k)))
            $$
            measurements for $\Bc_{k, b}$,
            \item with
            $$
                m = O\left(k \cdot \frac{\log(n / k)}{\log \log(n / k)} \right)
            $$
            measurements for $\Tc_k$, provided that $k = \omega(\log n)$.
        \end{itemize}

    \end{theorem}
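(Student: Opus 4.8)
The plan is to combine the conditional recovery guarantee established just above (via Algorithm~\ref{recovery_algorithm}) with the existence results of Corollaries~\ref{block_upper} and~\ref{tree_upper}. The whole theorem reduces to verifying, for each of the two models $\Mc_k \in \set{\Bc_{k,b}, \Tc_k}$, that the matrices whose existence those corollaries guarantee can actually be fed into the recovery algorithm, and that the resulting number of rows matches the claimed bounds.

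First I would isolate exactly what the recovery analysis needs: a matrix $A$ that is $\eps$-$\Mc_k^{(2)}$-RIP-$1$ for a sufficiently small constant $\eps$ and that has $\ell_1$ operator norm $\|A\|_1 \leq 1 + \eps$. Under these two hypotheses the triangle-inequality computation above shows that Algorithm~\ref{recovery_algorithm}, run on $y = Ax$, returns $x^*$ with $\|x - x^*\|_1 \leq (3 + O(\eps)) \min_{x' \text{ $\Mc_k$-sparse}} \|x - x'\|_1$, i.e. a constant approximation factor. Thus the theorem follows once I exhibit, for each model, a matrix with these two properties and few rows.

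The key step --- and the one I expect to be the crux --- is the observation that both models are closed under pairwise unions up to a doubling of the sparsity, which is precisely what converts a single-model RIP-$1$ matrix into a $(2)$-RIP-$1$ matrix. For block-sparsity, the union of two $\Bc_{k,b}$-sparse sets occupies at most $2k/b$ blocks and is therefore $\Bc_{2k,b}$-sparse, so any $\eps$-$\Bc_{2k,b}$-RIP-$1$ matrix is automatically $\eps$-$\Bc_{k,b}^{(2)}$-RIP-$1$. For tree-sparsity, the union of two rooted subtrees of size $k$ is again a rooted subtree, of size at most $2k$, hence $\Tc_{2k}$-sparse, so any $\eps$-$\Tc_{2k}$-RIP-$1$ matrix is $\eps$-$\Tc_k^{(2)}$-RIP-$1$. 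This closure is what bridges the existence results to the recovery guarantee; the operator-norm requirement comes for free, since every singleton is both block- and tree-sparse, and applying the RIP-$1$ inequality to the standard basis vectors forces each column of $A$ to have $\ell_1$-norm at most $1 + \eps$, giving $\|A\|_1 \leq 1 + \eps$.

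Finally I would invoke Corollary~\ref{block_upper} with sparsity $2k$ in place of $k$ to produce an $\eps$-$\Bc_{2k,b}$-RIP-$1$ matrix with $m = O(k \cdot (1 + \log_b(n/k)))$ rows, and Corollary~\ref{tree_upper} with sparsity $2k$ to produce an $\eps$-$\Tc_{2k}$-RIP-$1$ matrix with $m = O(k \cdot \log(n/k)/\log\log(n/k))$ rows, valid when $k = \omega(\log n)$. Feeding these matrices into the recovery guarantee above yields the two claimed bounds. The only routine check is that replacing $k$ by $2k$ does not change the asymptotics of the corollaries' bounds, which is immediate since $\log_b(n/2k)$ and $\log(n/2k)/\log\log(n/2k)$ differ from their $k$-versions by at most constant factors, absorbed into the $O(\cdot)$.
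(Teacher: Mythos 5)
Your proposal is correct and follows essentially the same route as the paper: reduce to the conditional guarantee of Algorithm~\ref{recovery_algorithm}, observe that an $\eps$-$\Bc_{2k,b}$-RIP-$1$ (resp.\ $\eps$-$\Tc_{2k}$-RIP-$1$) matrix is $\eps$-$\Bc_{k,b}^{(2)}$-RIP-$1$ (resp.\ $\eps$-$\Tc_k^{(2)}$-RIP-$1$) because both models are closed under pairwise unions at doubled sparsity, get $\|A\|_1 \leq 1+\eps$ from the singletons, and invoke Corollaries~\ref{block_upper} and~\ref{tree_upper}. The only difference is that you spell out the union-closure and the $k \mapsto 2k$ asymptotics checks that the paper leaves implicit.
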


\section{Proof of Lemma~\ref{concentration}}
    \label{concentration_proof}
        To prove the lemma we need the following version of Chernoff bound~\cite{MR95}.
        \begin{theorem}[\cite{MR95}]
            \label{chernoff}
            Suppose that $X_1$, $X_2$, \ldots, $X_n$ are independent binary random variables.
            Denote $\mu = \Exp{}{X_1 + X_2 + \ldots + X_n}$. Then for any $\tau > 0$
            $$
                \Prb{}{X_1 + X_2 + \ldots + X_n \geq (1 + \tau) \mu} \leq
                \left(\frac{e^{\tau}}{(1 + \tau)^{(1 + \tau)}}\right)^{\mu}.
            $$
        \end{theorem}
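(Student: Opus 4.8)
The plan is to apply the standard exponential-moment (Chernoff) method, which converts the tail bound into a product of one-dimensional moment generating functions and then optimizes over a single free parameter.

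First I would fix an arbitrary $s > 0$ and apply Markov's inequality to the nonnegative random variable $e^{s(X_1 + X_2 + \ldots + X_n)}$, obtaining
$$
    \Prb{}{X_1 + X_2 + \ldots + X_n \geq (1 + \tau)\mu} \leq \frac{\Exp{}{e^{s(X_1 + \ldots + X_n)}}}{e^{s(1 + \tau)\mu}}.
$$
Since the $X_i$ are independent, the numerator factorizes as $\prod_{i=1}^n \Exp{}{e^{s X_i}}$.

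Next I would bound each factor. Writing $p_i = \Exp{}{X_i}$ and using that each $X_i$ is binary, we have $\Exp{}{e^{s X_i}} = 1 + p_i(e^s - 1)$, and the elementary inequality $1 + x \leq e^x$ gives $\Exp{}{e^{s X_i}} \leq e^{p_i(e^s - 1)}$. Multiplying over $i$ and using $\mu = \sum_i p_i$ yields $\Exp{}{e^{s(X_1 + \ldots + X_n)}} \leq e^{\mu(e^s - 1)}$, so the tail is at most $\exp\left(\mu\left(e^s - 1 - s(1 + \tau)\right)\right)$.

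Finally I would optimize over $s > 0$. Differentiating the exponent $e^s - 1 - s(1 + \tau)$ and setting the derivative to zero gives $e^s = 1 + \tau$, i.e.\ $s = \ln(1 + \tau)$, which is positive for $\tau > 0$. Substituting this value, the exponent becomes $\tau - (1 + \tau)\ln(1 + \tau)$, and hence the tail probability is bounded by
$$
    \exp\left(\mu\left(\tau - (1 + \tau)\ln(1 + \tau)\right)\right) = \frac{e^{\mu \tau}}{(1 + \tau)^{\mu(1 + \tau)}} = \left(\frac{e^{\tau}}{(1 + \tau)^{1 + \tau}}\right)^{\mu},
$$
which is precisely the claimed inequality. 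All the steps are routine; the only place the \emph{binary} hypothesis enters is the identity $\Exp{}{e^{s X_i}} = 1 + p_i(e^s - 1)$, which makes the per-coordinate generating function exact before the convexity bound $1 + x \leq e^x$ is applied. I expect the only real subtlety to be the final optimization: the specific choice $s = \ln(1 + \tau)$ is what makes the exponent collapse to the exact stated form rather than merely to a bound of the right order.
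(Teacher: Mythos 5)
Your proof is correct and is the standard exponential-moment argument; the paper itself gives no proof of this statement, citing it directly from~\cite{MR95}, where essentially this same derivation (Markov's inequality applied to $e^{sX}$, factorization by independence, the bound $1+x\leq e^x$, and the optimal choice $s=\ln(1+\tau)$) appears.
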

        We can enumerate all $dt$ outgoing from $T$ edges arbitrarily.
        Then let us denote $C_u$ the event ``$u$-th of these edges 
        collides with $v$-th edge on the right side, where $v < u$''.

        We would like to upper bound the probability of the event
        ``at least $\eps dt$ of events $C_u$ happen''. As shown in~\cite{BMRV02}
        this probability is at most
        $$
            \Prb{}{B\left(dt, \frac{dt}{m}\right) \geq \eps dt}.
        $$
        Thus we can apply Theorem~\ref{chernoff} with $\tau = \eps m / dt - 1$
        and $\mu = (dt)^2 / m$.
        Then, the statement of the Lemma can be verified routinely.

\section{Proof of Theorem~\ref{model_upper}}

        If we combine Lemma~\ref{concentration}
        with the union bound we see that
        it is sufficient to prove that
        $$
            \forall 1 \leq t \leq l \quad \#(\Mc_{k}, t) \cdot \left(\delta \cdot
            \frac{\eps m}{dt}\right)^{-\eps dt} < \frac{1}{2l},
        $$
        and
        $$
            \forall l \leq t \leq k \quad \#(\Mc_{k}, t) \cdot \left(\delta \cdot
            \frac{\eps m}{dt}\right)^{-\eps dt} < \frac{1}{2k},
        $$
        provided that $m \geq C dk / \eps$ (here $C$ and $\delta$ are the
        constants from the statement of Lemma~\ref{concentration} and
        $l$ is the quantity from the statement of
        Theorem~\ref{model_upper}).
        Now let us plug~(\ref{block_estimates}).
        Namely, for $1 \leq t \leq l$ we use the second estimate, for
        $l \leq t \leq k$ we use the first estimate.
        Thus, it is left to choose $d$ such that we have
        \begin{eqnarray}
            \label{block_low_inequality}
            \forall 1 \leq t \leq l &\quad& \left(\frac{en}{t}\right)^t
            \left(\frac{C'k}{t}\right)^{-\eps dt} < \frac{1}{2l}\\
            \label{block_high_inequality}
            \forall l \leq t \leq k &\quad& |\Mc_k| \cdot
            \left(\frac{ek}{t}\right)^t
            \left(\frac{C'k}{t}\right)^{-\eps dt} < \frac{1}{2k},
        \end{eqnarray}
        for $C'$ being sufficiently large constant.
        It would allow us to set $m = O(dk / \eps)$.

        It is straightforward to check that, whenever $d > 1 / \eps$, the
        left-hand sides of~(\ref{block_low_inequality})
        and~(\ref{block_high_inequality}) are log-convex. Thus,
        to check~(\ref{block_low_inequality})
        and~(\ref{block_high_inequality}), it sufficient to check them
        for $t = 1, l, k$.

        Let us first handle~(\ref{block_low_inequality}).
        For it to hold for a given $1 \leq t \leq l$ it is enough
        to set
        $$
        d = O\left(\frac{1}{\eps \cdot \log(k / t)} \cdot
        \left(\frac{\log l}{t} + \log(n/t)\right)\right).
        $$
        For $t = 1$ this expression is $O(\eps^{-1} \cdot \log_k n)$,
        for $t = l$ it is
        \begin{equation}
            \label{d_final}
            O\left(\frac{\log(n / l)}{\eps \cdot \log(k / l)}\right).
        \end{equation}
        Clearly, the latter bound is always larger than the former
        one.

        Now let us handle~(\ref{block_high_inequality}).
        Similarly, for $l \leq t \leq k$ it is enough to set
        \begin{equation}
        \label{d_second}
        d = O\left(\frac{1}{\eps} \cdot
        \left(1 + \frac{\log |\Mc_k| + \log k}{t \cdot \log(k / t)}\right)\right)
        = O\left(\frac{1}{\eps} \cdot
        \left(1 + \frac{l \cdot \log(n / k) + \log k}{t \cdot \log(k / t)}\right)\right),
        \end{equation}
        where the second step 
        is due to the definition of $l$ ($l =
        \log |\Mc_k| / \log(n / k)$). 

        Now we will prove that for $t = l, k$ the right-hand side
        of~(\ref{d_second}) is at most~(\ref{d_final}).

        For $t = l$ we obviously have
        $$
            \frac{l \cdot \log(n / k)}{t \cdot \log(k/t)} =
            O\left(\frac{\log(n/l)}{\log(k/l)}\right).
        $$
        At the same time
        $$
            \frac{\log k}{t \cdot \log(k/t)} = 
            O\left(\frac{\log(n/l)}{\log(k/l)}\right),
        $$
        since
        $\log k=O(l \cdot \log(n/l))$. Indeed, if $l \geq \log n$,
        then we are done. Otherwise, $\log(n / l) = \Omega(\log n)
        = \Omega(\log k)$, and we are again done.

        Now let us handle $t = k$.
        Indeed,
        $$
        \frac{\log k}{t \cdot \log(k/t)} = O(1) 
        $$
        in this case.
        At the same time,
        $$
            \frac{l \cdot \log(n / k)}{t \cdot \log(k/t)} =
            O\left(\frac{\log(n/l)}{\log(k/l)}\right),
        $$
        since
        $l \cdot \log(k / l) = O(k)$.

        Overall, this argument shows that we can take
        $$
            d = O\left(\frac{\log(n / l)}{\eps \cdot \log(k / l)}\right)
        $$
        and
        $$
        m = O\left(\frac{k}{\eps^2}\cdot\frac{\log(n / l)}{\log(k / l)}\right).
        $$
\end{document}